\documentclass[11pt,reqno]{amsart}
\usepackage[T1]{fontenc}
\usepackage{lmodern,amsmath,amssymb,amsthm,mathtools,mathrsfs,microtype}
\usepackage[paper=letterpaper,left=1in,right=1in,top=.9in,bottom=.9in]{geometry}
\usepackage{graphicx,tikz}
\usetikzlibrary{arrows.meta,calc}
\usepackage[hidelinks]{hyperref}
\microtypesetup{expansion=false}
\allowdisplaybreaks[2]
\newtheorem{theorem}{Theorem}[section]
\newtheorem{proposition}[theorem]{Proposition}
\newtheorem{lemma}[theorem]{Lemma}
\newtheorem{corollary}[theorem]{Corollary}
\theoremstyle{remark}

\newcommand{\T}{\mathbb T}\newcommand{\Z}{\mathbb Z}
\newcommand{\R}{\mathbb R}\newcommand{\C}{\mathbb C}
\newcommand{\meas}{\operatorname{meas}}

\newcommand{\Id}{I}\newcommand{\dd}{\,d}
\newcommand{\si}{\sigma}\DeclareMathOperator{\Tr}{Tr}
\DeclareMathOperator{\rank}{rank}\DeclareMathOperator{\Var}{Var}
\DeclareMathOperator{\dist}{dist}
\DeclareMathOperator{\supp}{supp}\DeclareMathOperator{\diag}{diag}

\newcommand{\norm}[1]{\lVert#1\rVert}
\newcommand{\lbox}{\underline{\dim}_{\mathrm B}}
\newcommand{\ubox}{\overline{\dim}_{\mathrm B}}
\newcommand{\dimH}{\mathrm{dim_H}}\newcommand{\pdim}{\dim_{\mathrm P}}

\newcommand{\be}{\begin{equation}}
\newcommand{\ee}{\end{equation}}

\newcommand{\wt}{\widetilde}
\newcommand{\wh}{\widehat}
\newcommand{\de}{\delta}

\newcommand{\al}{\alpha}
\newcommand{\bt}{\beta}
\newcommand{\ga}{\gamma}

\newcommand{\Om}{\Omega}
\newcommand{\lb}{\lambda}

\renewcommand{\th}{\theta}

\newcommand{\ep}{\varepsilon }

\newcommand{\spec}{{\mathrm{spec}}}

\newcommand{\mg}{}

\title[Fractal dimensions for EHM]{
Hidden singularity and fractal dimensions\\for extended Harper operators}
\author{S. Becker}
\address{Bocconi University, Milan, Italy}
\author{S. Jitomirskaya}
\address{University of California, Berkeley, USA, and University of California, Irvine, USA}
\author{I. Krasovsky}
\address{Imperial College London, UK, and SISSA, Trieste, Italy}
\date{}
\hypersetup{pdftitle={Hidden singularity and fractal dimensions for  extended Harper operators},
 pdfauthor={S. Becker, S. Jitomirskaya, I. Krasovsky}}
\RequirePackage{xcolor}
\begin{document}
\begin{abstract}
We prove that, for every irrational frequency, the spectra of the extended Harper family throughout
the zero-measure regime have finite $1/2$-dimensional Hausdorff measure and lower box dimension at
most $1/2$. Both {\mg numbers} improve to $1/3$ on the isotropic critical boundary. The proof is based on a family of unitary transformations, that specializes to the chiral transform of \cite{bjk} for the case of the critical almost Mathieu. This family allows to make every operator in
this regime spectrally equivalent to a {\mg singular Jacobi operator.} Using these singular representatives, we {\mg apply} the method of \cite{bjk} to construct covers by $O(q_n)$ intervals of total
length $O(q_n^{-1})$, improving to $O(q_n^{-2})$ on the isotropic boundary, {\mg where $q_n$ is the denominator of a canonical approximant to $\alpha$.} The covers give
upper box bounds and, with packing estimates, examples where the box-counting dimension
fails to exist.
\end{abstract}
\maketitle

\section{Introduction}\label{sec:models}
The extended Harper model describes an electron on a square lattice in a
uniform perpendicular magnetic field, with hopping between nearest 
and {\mg next-nearest} 
neighbors. It is thus a natural next model
after the Harper model in the study of magnetic lattice spectra {\cite{HTHK}}. 
The latter goes back to the work of Peierls and Harper
\cite{Peierls,Harper}, and its dependence on the magnetic flux is beautifully
depicted by the Hofstadter butterfly \cite{Hofstadter}. Choosing the Landau
gauge and taking a partial Fourier transform in one lattice direction gives
a family of one-dimensional quasiperiodic Jacobi operators, also referred to
as the extended Harper model (EHM). Writing $2\pi\alpha$ for the flux through
a square and $\theta\in\T=\mathbb{R}/\mathbb{Z}$ for the transverse
quasimomentum, for $\lambda=(\lambda_1,\lambda_2,\lambda_3)\in[0,\infty)^3$
the extended Harper operator on $\ell^2(\Z)$ is
\begin{align}
 (H_{ \alpha,\theta ,\lambda }u)_m
 &=\overline{c_{\lambda,\alpha}(\theta+(m-1)\alpha)}u_{m-1}
   +c_{\lambda,\alpha}(\theta+m\alpha)u_{m+1}
   \notag\\[-2pt]
 &\hspace{2em}+2\cos(2\pi(\theta+m\alpha))u_m,  \notag \\
 c_{\lambda,\alpha}(t)
 &=\lambda_1e^{-2\pi i(t+\alpha/2)}
   +\lambda_2+\lambda_3e^{2\pi i(t+\alpha/2)}.\label{eq:c}
\end{align}
For irrational $\alpha$, its spectrum is independent of $\theta$; we denote
it by   $\Sigma_{\alpha,\lambda}$ . For $\lambda=(0,\lambda_2,0)$, with
$\lambda_2>0$, EHM becomes the rescaled almost Mathieu operator, {\mg which arises by}
allowing only nearest-neighbor hopping in the two-dimensional magnetic
Laplacian and {\mg which features} a basic phase transition at $\lambda_2=1$.
Allowing the additional hoppings leads to a considerably richer phase
diagram {\mg (cf. Figure \ref{fig:parameter-regions})}, with several critical boundaries between regions I,II, and III, and the entire region III being self-dual; see
\cite{MZ,HTHK,AJM}. The triangular lattice is also a special case of extended
Harper \cite{InoKohmoto}.

In mathematics, spectral theory of the extended Harper model has been a
popular domain for the study of validity and limitations of arguments
originally developed for the almost Mathieu family, e.g.\
\cite{JKS,JM12,JM13,HanDry}. The main interest in this regard comes from two aspects
that differ significantly from almost Mathieu: the nontrivial self-dual
region III, where the usual duality between supercritical and subcritical
regimes is not available, and the set of parameters where the corresponding
Jacobi matrix becomes singular, that is $c_{\lambda,\alpha}(\theta)$ has a zero on {\mg a period}, limiting applicability of methods based on
nonsingularity. 

The phase diagram of spectral types is by now well
understood \cite{AJM}. Various questions, in particular, about
the geometry of the spectrum, however, remain.

In particular,  fractal dimensions of the spectrum have been largely unstudied. The question, of course, only makes sense when the spectrum has measure zero, a regime,  that has remained rather mysterious, being not amenable to {\mg either reducibility or} localization methods. For any irrational $\alpha$, a direct corollary of
{ Proposition 3.1}, Theorems~3.4, 3.5, and~4.1 of {\cite{AJM}} and Aubry duality is that,
{\mg writing $\Lambda=\lambda_1+\lambda_3$,
\[ \label{lines}
|\Sigma_{\alpha,\lambda}|=0
\quad\Longleftrightarrow\quad
\lambda\in\mathcal C_0=
\begin{aligned}[t]
&\{\lambda:\Lambda=1\geq\lambda_2\}\\
&{}\cup\{\lambda:\lambda_2=1\geq \Lambda\}\\
&{}\cup\{\lambda:\Lambda=\lambda_2\geq1\}\\
&{}\cup\{\lambda:\lambda_1=\lambda_3,\ \Lambda\ge\max\{1,\lambda_2\}\}.
\end{aligned}
\]}
The first three regions, denoted by $\mathcal L_{\mathrm I}$, $\mathcal L_{\mathrm{II}}$, and $\mathcal L_{\mathrm{III}}$, respectively, are precisely the critical boundaries
 where at least two of the three quantities
$1$, $\lambda_2$, and $\Lambda=\lambda_1+\lambda_3$ attain the $\max(1,\lambda_2,\Lambda)$.
The fourth, denoted by $\mathcal{I}_{\mathrm{iso}}$ is the isotropic\footnote{The isotropy means $\lambda_1=\lambda_3$.} 
part of the interior of region~III.

{We} distinguish the isotropic critical boundary
$$
\mathcal{B}_{\mathrm{iso}}
 =\{\lambda:\lambda_1=\lambda_3,\
               \lambda_1+\lambda_3=\max\{1,\lambda_2\}\},
$$
{\mg which is the union of intersections $\mathcal I_{\rm iso}\cap\mathcal L_j$, $j=I,II,III$. }

\begin{figure}[htbp]
 \centering
 \tikzset{
  amo/.style={circle,fill=red!75!black,draw=white,line width=.4pt,inner sep=2.4pt},
  triangular/.style={rectangle,fill=green!40!black,draw=white,line width=.4pt,inner sep=2.4pt}}
 \begin{minipage}[t]{.48\textwidth}
  \centering
  \begin{tikzpicture}[x=2.40cm,y=2.40cm,>=Stealth,font=\small]
   \fill[gray!8] (0,0) rectangle (1,1);
   \fill[orange!6] (0,1)--(1,1)--(2.15,2.15)--(0,2.15)--cycle;
   \fill[violet!12] (1,0)--(2.15,0)--(2.15,2.15)--(1,1)--cycle;
   \draw[->] (0,0)--(2.34,0) node[right] {$\Lambda$};
   \draw[->] (0,0)--(0,2.32) node[above] {$\lambda_2$};
   \draw[blue!70!black,line width=1.1pt] (1,0)--(1,1)--(2.15,2.15);
   \draw[orange!85!black,dashed,line width=1.1pt] (0,1)--(1,1);
   \draw (1,.035)--(1,-.035) node[below] {$1$};
   \draw (.035,1)--(-.035,1) node[left] {$1$};
   \node at (.43,.44) {$\mathrm I$};
   \node at (.60,1.69) {$\mathrm {II}$};
   \node at (1.69,.78) {$\mathrm {III}$};
   \node[align=center,font=\scriptsize,text=violet!80!black] at (1.61,.40)
       {projection of\\$\mathcal I_{\mathrm{iso}}$};
   \node[anchor=east] at (.95,.69) {$\mathcal L_{\mathrm I}$};
   \node[anchor=south] at (.45,1.055) {$\mathcal L_{\mathrm {II}}$};
   \node[rotate=45,anchor=south] at (1.70,1.74) {$\mathcal L_{\mathrm {III}}$};
   \node[amo] at (0,1) {};
   \draw[red!75!black,thin] (.04,1.05)--(.16,1.30);
   \node[anchor=south west,text=red!75!black] at (.10,1.30) {critical AMO};
   \node[triangular] at (1,1) {};
   \draw[green!40!black,thin] (1.055,1.015)--(1.27,1.12);
   \node[anchor=west,text=green!40!black] at (1.27,1.12) {triangular};
   \node[anchor=north] at (1.1,-.28) {$\Lambda=\lambda_1+\lambda_3$};
  \end{tikzpicture}
 \end{minipage}\hfill
 \begin{minipage}[t]{.48\textwidth}
  \centering
  \begin{tikzpicture}[x=4.80cm,y=4.80cm,>=Stealth,font=\small]
   \fill[orange!20] (0,0)--(1,0)--(0,1)--cycle;
   \draw[->] (0,0)--(1.11,0) node[right] {$\lambda_1$};
   \draw[->] (0,0)--(0,1.15) node[above] {$\lambda_3$};
   \draw[orange!85!black,dashed,line width=1.1pt] (0,0)--(1,0) (0,0)--(0,1);
   \draw[blue!70!black,line width=1.1pt] (1,0)--(0,1);
   \node[align=center] at (.31,.29)
       {$\mathcal L_{\mathrm {II}}$\\$\dimH\le\tfrac12$};
   \node at (.71,1.02) {$\lambda_2=1$};
   \draw[violet!80!black,line width=1.2pt,->] (.5,.5)--(1.035,1.035)
       node[pos=.56,above,sloped] {$\mathcal I_{\mathrm{iso}}$};
   \node[amo] at (0,0) {};
   \node[anchor=north,align=center,text=red!75!black] at (.13,-.06)
       {critical AMO\\$\lambda_{\mathrm A}=(0,1,0)$};
   \node[triangular] at (1,0) {};
   \node[anchor=north,align=center,text=green!40!black] at (.90,-.06)
       {triangular\\$\lambda_{\mathrm T}=(1,1,0)$};
   \fill (0,1) circle (1.3pt);
   \node[anchor=east] at (-.025,1) {$1$};
   \fill[blue!70!black] (.5,.5) circle (1.6pt);
   \draw[blue!70!black,<-] (.52,.50)--(.67,.43)
       node[right,align=center]
       {$\mathcal B_{\mathrm{iso}}$\\$\dimH\le\tfrac13$};
  \end{tikzpicture}
 \end{minipage}
 \caption{Left: the EHM regions projected onto $(\Lambda,\lambda_2)$, where
 $\Lambda=\lambda_1+\lambda_3$. The blue boundary is $\Lambda=\max\{1,\lambda_2\}$.
 The purple shading is the projection of $\mathcal I_{\mathrm{iso}}$;
 in the interior of region~III, only couplings with $\lambda_1=\lambda_3$ belong to $\mathcal C_0$.
 {\mg Right: projection onto $\lambda_2=1$.} The purple ray $\lambda_1=\lambda_3\ge1/2$ is the slice of
 $\mathcal I_{\mathrm{iso}}$. The entire shaded triangle
 $\mathcal L_{\mathrm {II}}\subset\mathcal C_0$ has {\mg Hausdorff dimension of the spectrum}
 at most $1/2$ for every irrational frequency. At the blue midpoint
 $(1/2,1,1/2)\in\mathcal B_{\mathrm{iso}}$, the bound improves to $1/3$.
 Red circles mark the critical AMO; green squares mark the triangular operator.}
 \label{fig:parameter-regions}
\end{figure}

For the almost Mathieu family, $\lambda=(0,\lambda_2,0),$ there is only one parameter with zero measure spectrum: $\lambda_2=1.$ Thouless conjectured that Hausdorff dimension is bounded by $1/2$ in this case, for all irrational $\alpha.$ 

In \cite{bjk} we developed a method to find tight spectral covers for {\it singular} Jacobi matrices, that lead to dimension bounds. Historically, singularity has been an obstacle to overcome when extending
various almost Mathieu arguments \cite{JM12,JM13,AJM,HYZ}. In contrast, in
\cite{bjk} we exploit singularity: while critical almost Mathieu is a Schrödinger operator, so certainly not a singular Jacobi
matrix,  we found what we call a chiral transform that gives it an isospectral singular Jacobi
representation, leading to a proof of the Thouless conjecture for every irrational frequency. 

 Here we show that the method of \cite{bjk} can be extended to the entire  zero-measure regime of the EHM, giving the same bound {\mg $1/2$ without arithmetic restrictions on $\al$.} Moreover,
on the isotropic critical boundary the bound improves to $1/3$.

Operators in {${\mathcal L}_{\mathrm {III}}$} and $\mathcal{I}_{\mathrm{iso}}$ are already singular, and operators in {${\mathcal L}_{\mathrm{I}}$} are made singular by the standard Aubry duality. For those, the problem reduces to finding good spectral covers. 

The region $\mathcal L_{\mathrm{II}}$, however, is more challenging. It is
the  self-dual triangle {\mg (see Figure \ref{fig:parameter-regions})}, that contains 
critical almost Mathieu and the isotropic triangular model as its boundary points, at {\mg $(0,1,0)$ and $(1, 1, 0)$, respectively}.  The additional challenge is to unlock its hidden singularity. We do it by finding a family of transforms  that singularize each $H_{\alpha,\theta,\lambda}$ with $\lambda\in \mathcal L_{\mathrm{II}}.$ The chiral transform of \cite{bjk} is one element in this family.

{\mg Intuitively}, the physical question of what survives when additional hopping is allowed
is related to the recently established analytic robustness of the almost Mathieu results \cite{GJY, gj, lxz} , although we note that criticality itself is not robust \cite{avilaglobal}.
Our results are global and complete in the parameter class considered here, with no
smallness assumptions or other restrictions.


\subsection{Hausdorff dimension}


We write   $\meas_t$  for $t$-dimensional Hausdorff measure,
$\dimH $ for Hausdorff dimension, and
$\underline{\dim}_{\mathrm{B}}$ and $\overline{\dim}_{\mathrm{B}}$ for lower
and upper box dimensions.

\begin{theorem}\label{thm:HD}
For every irrational $\alpha$ and every   $\lambda\in\mathcal{C}_0$ ,
$$
  \meas_{1/2}(\Sigma _{ \alpha ,\lambda })<\infty,
\qquad
\dimH \Sigma_{ \alpha ,\lambda }
\leq\underline{\dim}_{\mathrm{B}}\Sigma_{ \alpha ,\lambda }\leq\frac12.
$$
For $\lambda\in\mathcal{B}_{\mathrm{iso}}$, these conclusions improve to
$$
  \meas_{1/3}(\Sigma _{ \alpha ,\lambda })<\infty,
\qquad
\dimH \Sigma_{ \alpha ,\lambda }
\leq\underline{\dim}_{\mathrm{B}}\Sigma_{ \alpha ,\lambda }\leq\frac13.
$$
\end{theorem}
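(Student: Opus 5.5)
The plan follows the abstract. For each $\lambda\in\mathcal C_0$ we first produce a singular Jacobi representative. Then we build covers of $\Sigma_{\alpha,\lambda}$ along the canonical approximants $p_n/q_n$ of $\alpha$, and finally convert these covers into the Hausdorff and box bounds.

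We treat the four pieces of $\mathcal C_0$ separately. On $\mathcal L_{\mathrm{III}}$ and $\mathcal I_{\mathrm{iso}}$ the off-diagonal $c_{\lambda,\alpha}$ already vanishes somewhere on $\T$. Indeed, $|c|$ attains $|\lambda_2-\Lambda|$ when $\lambda_1=\lambda_3$, and $\Lambda=\lambda_2$ forces a zero by the triangle inequality; this is where the singularity comes from. On $\mathcal L_{\mathrm I}$ we apply Aubry duality, which maps to parameters with $\lambda_2=\Lambda$ type structure, so that the dual operator is singular and isospectral up to scaling. The real work is on $\mathcal L_{\mathrm{II}}$, where $\lambda_2=1\ge\Lambda$. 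There we would construct a one-parameter family of unitary transforms generalizing the chiral transform of \cite{bjk}. Concretely, we pass to a two-step block form, conjugate by a $\theta$-dependent unitary mixing even and odd sites, and aim to show that $H_{\alpha,\theta,\lambda}$ (or a suitable function of it, e.g.\ its square restricted to a chiral subspace) is unitarily equivalent to a Jacobi operator whose off-diagonal entry has a zero on $\T$. The key check is that the transformed off-diagonal coefficient factorizes as a product of trigonometric polynomials, one of which vanishes exactly when $\lambda_2=1\ge\lambda_1+\lambda_3$.

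Given a singular representative $J_\theta$ with off-diagonal $b(\theta+m\alpha)$ vanishing at $\theta_0$, we follow the method of \cite{bjk}. Take the periodic approximant with frequency $p_n/q_n$ and choose the phases so that $b$ vanishes on the period. The periodic operator then decouples into finite $q_n\times q_n$ blocks, so its spectrum for phases near $\theta_0$ consists of eigenvalues, not bands. Next we compare the $\theta$-union of the periodic spectra with $\Sigma_{\alpha,\lambda}$, using continuity of the spectrum in $\alpha$ with the usual $|\alpha-p_n/q_n|^{1/2}$-type loss, sharpened by the singular structure. We expect to show that $\Sigma_{\alpha,\lambda}$ is covered by $O(q_n)$ intervals. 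To estimate the band widths we use the discriminant and Chambers-type formula: the $\theta$-dependence of the discriminant enters only through a term proportional to $\prod|c|$ or $\prod|b|$, which is tiny near the zero. Summed over bands, this gives total length $O(q_n^{-1})$. On $\mathcal B_{\mathrm{iso}}$ the zero of $b$ is a double zero, since $|c|^2$ touches zero quadratically when both $\lambda_1=\lambda_3$ and criticality hold. That gains an extra factor and gives total length $O(q_n^{-2})$.

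From the covers the conclusions follow directly. With $N=O(q_n)$ intervals of total length $L$, Hölder's inequality gives $\sum|I_j|^{s}\le N^{1-s}L^{s}$, which is bounded for $s=1/2$ when $L=O(q_n^{-1})$, and for $s=1/3$ when $L=O(q_n^{-2})$. Since $q_n\to\infty$, this gives $\meas_s<\infty$. For the box bound we refine the intervals to length about $L/N$, which shows that the covering number at scale $\varepsilon_n\asymp q_n^{-2}$ (resp.\ $q_n^{-3}$) is $O(q_n)$; hence $\lbox\le 1/2$ (resp.\ $1/3$) along this subsequence. We also need to control the longest intervals, not just the total length, and we would handle this through the same discriminant bounds. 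The main obstacle is the $\mathcal L_{\mathrm{II}}$ singularizing transform. The first thing to try is the explicit ansatz $U=\exp(i\phi(\theta)\sigma)$ acting on even/odd pairs, with $\phi$ chosen so that the $\lambda_1,\lambda_3$ hoppings are absorbed, reducing at $\lambda_1=\lambda_3=0$ to the chiral transform of \cite{bjk}.
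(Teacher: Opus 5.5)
Your outer frame matches the paper: singular representatives, covers by $O(q_n)$ intervals of total length $O(q_n^{-1})$ (resp.\ $O(q_n^{-2})$), then H\"older and refinement to scale $q_n^{-1-\varkappa}$. The gap is the cover step, which as you describe it fails for general irrational $\alpha$. You pass to $p_n/q_n$ and return via the standard $|\alpha-p_n/q_n|^{1/2}$ continuity. That fattens each of the $\sim q_n$ bands by $\asymp(q_nq_{n+1})^{-1/2}$, adding total length $\asymp(q_n/q_{n+1})^{1/2}$, which for bounded-type $\alpha$ does not even tend to zero. Comparing $q_n$-blocks at $\alpha$ and at $p_n/q_n$ entrywise still naively loses $\sim q_n/q_{n+1}$. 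So ``sharpened by the singular structure'' is exactly the missing argument, and band-width estimates at $p_n/q_n$, however sharp, do not supply it.

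The paper never leaves the true frequency. It cuts $H_{v,b,\alpha,\theta}$ at the continued-fraction return sites $x_k\in J_n$, where $|b(x_k)|\le C(|d|/q_n+s/q_n^2)$. The spectrum then lies in the union over $x\in J_n$ ($|J_n|<2/q_n$) of spectra of blocks $\widehat B_N(x)$, $N\in\{q_n,q_{n-1}\}$, plus rank-two corner perturbations of norm $\le\beta=\sup_k|b(x_k)|$. It then finds an explicit skew-Hermitian $K_N(x)$ such that $\mathcal E=B_N'-[K_N,B_N]$ is supported on the first and last rows and columns. Its entries are $\pi(-1)^{r-1}\chi_r(t_j)/\sin(\pi\alpha r)$, with $\chi_r$ half-traces of products of $\mathrm{SU}(2)$ matrices (constants on $\mathcal I_{\mathrm{iso}}$). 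Conjugating by the flow $U'=-UK_N$ bounds the summed variations of all eigenvalues over $J_n$ by $\int_{J_n}\|\mathcal E\|_{S_1}$. The estimates $\|\mathcal E\|_{S_2}\le C(s/q_n+|d|)$ and $\|[K_N,P_N]\|_{S_2}\le Cq_n$ (via $\sum_{r<q_n}\sin^{-2}(\pi\alpha r)\le Cq_n^2$) then give total length $C(|d|/q_n+s/q_n^2)$. This Lax-type identity is the key idea absent from your plan. On $\mathcal B_{\mathrm{iso}}$ one has $d=0$ (reached as a limit $d\to0$), which makes $\beta$ and the factors $b(t_{-1}),b(t_{N-1})$ in the boundary entries of $\mathcal E$ of order $q_n^{-2}$.

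The $\mathcal L_{\mathrm{II}}$ step is also left open. Your ansatz (mixing even and odd sites, squaring, chiral subspaces) is not carried out, and nothing of that kind is needed. The hidden singularity comes from the identity $|az^{-1}+1+bz|=|r_+z^{-1}+\Lambda+r_-z|$ for $|z|=1$, with $\Lambda=a+b$ and $r_\pm=(1\pm\sqrt{1-4ab})/2$; this reflects a root of the hopping polynomial across the unit circle. Hence a \emph{diagonal} phase gauge conjugates $H_{\alpha,\theta,(a,1,b)}$ to $H_{\alpha,\theta,(r_+,\Lambda,r_-)}$. Aubry duality then gives the family with potential coupling $\Lambda$ and hopping $z^{-1}(1+z)(r_-+r_+z)$, which vanishes at $z=-1$ since $r_++r_-=1$.

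Two smaller points. The longest interval needs no separate control, since it is bounded by the total length. And the $\mathcal B_{\mathrm{iso}}$ gain requires the hopping itself to have a double zero; at any simple zero $|c|^2$ already vanishes quadratically, so your justification does not distinguish the cases. On $\mathcal B_{\mathrm{iso}}\cap\mathcal L_{\mathrm I}$ with $\lambda_2<1$, this double zero only appears after duality.
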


The almost Mathieu results of Theorem {\ref{thm:HD}} provide a strengthening of the Hausdorff  dimension  bound of \cite{bjk}{,
following from the covers proved there in Theorem~7.11} .
The $1/3$ bound concerns the bicritical boundary studied in
\cite{HTHK}, including the point $(1/2,1,1/2)$. {\mg A different} scaling at
this boundary was already apparent in the rational bandwidth analysis of
\cite{HTHK}; here it leads to a dimension bound for every irrational
frequency. {\mg In our argument, it is the higher-order zero of the 
hopping coefficient} that produces the improvement.

\subsection{Box-counting dimension}

Of course, a bound on Hausdorff dimension does not in general control
box-counting dimension. Already for critical almost Mathieu, the Hausdorff
dimension can be zero \cite{LastShamis,ALSZ} while the packing dimension is
one \cite{JZ}. Our argument also gives bounds on upper box dimension in
terms of the arithmetic properties of the frequency.

Let $p_n/q_n$ be the {\mg canonical approximants of $\alpha$, and set
{
\begin{equation}\label{beta}
\tau(\alpha)=\limsup_n\frac{\ln q_{n+1}}{\ln q_n}\in [1,\infty],
\qquad
\beta(\alpha)=\limsup_n\frac{\ln q_{n+1}}{q_n}\in [0,\infty].
\end{equation}
\par}}
  For almost every   $\alpha$,  $\tau(\alpha)=1$ and $\beta(\alpha)=0.$ We call $\alpha$ weakly Liouville when $\beta(\alpha)>0$ {\mg (and therefore $\tau(\al)=\infty$)}.

\begin{theorem}\label{thm:upper-box}
For every irrational $\alpha$ and every   $\lambda\in\mathcal{C}_0$ ,
$$
\dim_{\mathrm{P}}\Sigma_{ \alpha ,\lambda }
\leq\overline{\dim}_{\mathrm{B}}\Sigma_{ \alpha ,\lambda }
\leq\frac{\tau(\alpha)}{1+\tau(\alpha)}.
$$
For $\lambda\in\mathcal{B}_{\mathrm{iso}}$, the last bound improves to
$$
\overline{\dim}_{\mathrm{B}}\Sigma_{ \alpha ,\lambda }
\leq\frac{\tau(\alpha)}{2+\tau(\alpha)}.
$$
\end{theorem}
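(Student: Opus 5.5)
The plan is to derive the upper box bound directly from the covers announced in the abstract, and then interpolate between consecutive denominators. As inputs, I take the cover results established later in the paper via the singular representatives. For every $\lambda\in\mathcal C_0$ and every $n$, the spectrum $\Sigma_{\alpha,\lambda}$ is covered by $N_n\le Cq_n$ intervals of total length $L_n\le C q_n^{-1}$. On $\mathcal B_{\mathrm{iso}}$ the total length improves to $L_n\le Cq_n^{-2}$. The constant $C$ depends only on $\lambda$.

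The first inequality, $\dim_{\mathrm P}\le\overline{\dim}_{\mathrm B}$, is standard for any bounded set and needs no work. For the second, fix $\epsilon>0$ small and let $N(\epsilon)$ be the minimal number of $\epsilon$-intervals needed to cover $\Sigma$. Given one cover of level $n$, a single cover interval of length $\ell_j$ can be covered by $\lceil \ell_j/\epsilon\rceil\le 1+\ell_j/\epsilon$ intervals of length $\epsilon$. Summing over the cover gives
\[
N(\epsilon)\le N_n+\frac{L_n}{\epsilon}\le C\Big(q_n+\frac{1}{\epsilon q_n^{\kappa}}\Big),
\]
with $\kappa=1$ in general and $\kappa=2$ on $\mathcal B_{\mathrm{iso}}$. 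Ideally we would choose $q_n\approx\epsilon^{-1/(1+\kappa)}$, which balances the two terms.

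The only real issue is that $q_n$ ranges over a sparse sequence, so the ideal choice is unavailable. Given $\epsilon$, I take $n$ to be the index with $q_n\le\epsilon^{-1/(1+\kappa)}<q_{n+1}$, and I use the level-$(n+1)$ cover. This gives
\[
N(\epsilon)\le C\Big(q_{n+1}+\frac{1}{\epsilon q_{n+1}^{\kappa}}\Big)\le C\big(q_{n+1}+\epsilon^{-1/(1+\kappa)}\big).
\]
The term $q_{n+1}$ may be large, so I use the level-$n$ cover as well and keep the better of the two bounds:
\[
N(\epsilon)\le C\min\Big\{q_{n+1}+\epsilon^{-1/(1+\kappa)},\ \epsilon^{-1}q_n^{-\kappa}+q_n\Big\}.
\]

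Next fix $\tau'>\tau(\alpha)$, so that $q_{n+1}\le q_n^{\tau'}$ for all large $n$. When $\tau=\infty$ the claimed bound is $1$, which is trivial. Write $q_n=\epsilon^{-s}$ with $s\le 1/(1+\kappa)$. The two bounds become
\[
N(\epsilon)\lesssim \epsilon^{-\tau' s}\quad\text{and}\quad N(\epsilon)\lesssim \epsilon^{-(1-\kappa s)}.
\]
Both are valid, where the first uses $\tau's\ge s$ and dominates $\epsilon^{-1/(1+\kappa)}$ once $\tau's$ exceeds it. So
\[
\frac{\log N(\epsilon)}{-\log\epsilon}\le\min\big\{\max\{\tau' s,\,1/(1+\kappa)\},\ 1-\kappa s\big\}+o(1).
\]
The minimum of $\max\{\tau's,\,1/(1+\kappa)\}$ and $1-\kappa s$ over $s$ is largest at $\tau's=1-\kappa s$, that is, at $s=1/(\tau'+\kappa)$. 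There its value is $\tau'/(\tau'+\kappa)$, which is at least $1/(1+\kappa)$.

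Letting $\tau'\downarrow\tau$ yields $\overline{\dim}_{\mathrm B}\le \tau/(1+\tau)$ for $\kappa=1$ and $\tau/(2+\tau)$ for $\kappa=2$. This is the statement.

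The main obstacle is therefore not this interpolation but the cover inputs themselves. One needs the bounds $N_n=O(q_n)$ and $L_n=O(q_n^{-\kappa})$ uniformly in $n$ for all irrational $\alpha$, including the rational-approximant construction for the singularized operators on $\mathcal L_{\mathrm{II}}$. I am assuming these from the main cover theorem of the paper. If those covers were only available along a subsequence of $n$, the same argument would still go through, provided the subsequence has controlled gaps in the sense of $\tau$.
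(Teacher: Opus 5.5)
Your argument is correct and is essentially the paper's proof. The paper likewise feeds the covers of Theorem~\ref{thm:all-scales} ($M_n\le q_n+q_{n-1}$ intervals of total length $O(q_n^{-\varkappa})$, with $\varkappa=2$ on $\mathcal B_{\mathrm{iso}}$) into Proposition~\ref{prop:dimension}, which for $q_{n+1}^{-(1+\varkappa)}\le\varepsilon\le q_n^{-(1+\varkappa)}$ takes the minimum of the level-$n$ and level-$(n+1)$ counts and optimizes in $\ln(1/\varepsilon)$. The only cosmetic difference is that the paper first obtains $\limsup_n \ln q_{n+1}/(\varkappa\ln q_n+\ln q_{n+1})$ and then passes to $\tau(\alpha)$ via monotonicity of $t/(\varkappa+t)$, whereas you work with $\tau'>\tau(\alpha)$ directly.
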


Here $\dim_{\mathrm{P}}$ denotes packing dimension.
When $\tau(\alpha)=1$, which holds for Lebesgue almost every $\alpha$,
these are upper box-dimension bounds of $1/2$ and $1/3$, respectively.

For weakly Liouville frequencies the situation is quite different.
Combining our   bounds on lower box dimension  with the packing estimates of
\cite{JZ,HYZ} gives

\begin{corollary}\label{cor:box-nonexistence}
{\mg Suppose $\beta(\alpha)>0$, and 
$$
\lambda\in  \{(0,1 , 0),(1,1,0)\}
\quad\text{or}\quad
\lambda\in\mathcal{I}_{\mathrm{iso}} { \cup \mathcal B_{\mathrm{iso}}}, \quad\lambda_2>0.
$$
Then
\be\label{noboxdim}
\underline{\dim}_{\mathrm{B}}\Sigma_{\alpha,\lambda}\leq\frac12
<1=\dim_{\mathrm{P}}\Sigma_{\alpha,\lambda}=\overline{\dim}_{\mathrm{B}}\Sigma_{\alpha,\lambda}.
\ee}
For $\lambda\in\mathcal{B}_{\mathrm{iso}}$ with $\lambda_2>0$, the lower box
bound improves to $1/3$. Thus box-counting dimension does not exist in all
these cases, including the critical almost Mathieu and triangular models.
\end{corollary}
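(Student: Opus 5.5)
The corollary combines two inputs: the lower box bound of Theorem~\ref{thm:HD}, and a packing/upper box lower bound for weakly Liouville frequencies taken from \cite{JZ,HYZ}.

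\emph{Step 1 (left inequality).} By the definition of $\mathcal C_0$, every listed $\lambda$ lies in it. The points $(0,1,0)$ and $(1,1,0)$ lie on $\mathcal L_{\mathrm{II}}$, the set $\mathcal I_{\mathrm{iso}}$ is the fourth piece of $\mathcal C_0$, and $\mathcal B_{\mathrm{iso}}\subset\mathcal C_0$ as a union of intersections $\mathcal I_{\mathrm{iso}}\cap\mathcal L_j$. Theorem~\ref{thm:HD} then gives $\underline{\dim}_{\mathrm B}\Sigma_{\alpha,\lambda}\le 1/2$ for every irrational $\alpha$. On $\mathcal B_{\mathrm{iso}}$ it gives the improved bound $1/3$. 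No arithmetic condition on $\alpha$ is used here.

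\emph{Step 2 (right equalities).} We need $\dim_{\mathrm P}\Sigma_{\alpha,\lambda}=1$ when $\beta(\alpha)>0$. The general inequalities $\dim_{\mathrm P}\le\overline{\dim}_{\mathrm B}\le 1$ for compact subsets of $\R$ then force $\overline{\dim}_{\mathrm B}=1$.
\begin{itemize}
\item For critical almost Mathieu, $\lambda=(0,1,0)$, this is exactly the result of \cite{JZ}.
\item For the remaining cases we invoke the packing-dimension result of \cite{HYZ} for extended Harper operators with weakly Liouville frequencies. The restriction $\lambda_2>0$ and the choice of the isotropic set together with the triangular point are, I expect, precisely the hypotheses under which \cite{HYZ} proves $\dim_{\mathrm P}=1$.
\end{itemize}

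The mechanism behind both cited results is the same, and I would recall it to check applicability. At a weakly Liouville scale $q_{n+1}\ge e^{\beta q_n/2}$, the spectrum is Hausdorff-close to the periodic spectrum $\Sigma_{p_n/q_n}$. The closeness follows from continuity of spectra in $\alpha$ for Jacobi operators with Lipschitz coefficients, giving error $O(|\alpha-p_n/q_n|)$, which is exponentially small. Meanwhile $\Sigma_{p_n/q_n}$ has $q_n$ bands with gaps that are only polynomially small in $q_n$. One then shows that in many subintervals the spectrum has many well-separated points at every intermediate scale. The ingredient specific to EHM is that $c_{\lambda,\alpha}$ has zeros for these $\lambda$, so the periodic band structure must be controlled via the discriminant and not via a nonsingular cocycle. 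This is where the hypotheses of \cite{HYZ} enter.

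\emph{Step 3 (conclusion).} Combining Steps 1 and 2 yields \eqref{noboxdim}, with $1/3$ replacing $1/2$ on $\mathcal B_{\mathrm{iso}}$. Since $\underline{\dim}_{\mathrm B}<\overline{\dim}_{\mathrm B}$, the box-counting dimension does not exist.

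The main obstacle is Step 2: verifying that the parameter sets in the corollary are covered by the packing estimates of \cite{JZ,HYZ}, in particular the triangular point and the boundary $\mathcal B_{\mathrm{iso}}$. If a boundary case were not literally covered, I would extend the argument of \cite{HYZ} by continuity in $\lambda$. Their estimates depend only on the Lipschitz constants of the coefficients and on lower bounds on the periodic bandwidths, and these should be locally uniform in $\lambda$ for $\lambda_2>0$.
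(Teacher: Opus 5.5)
Your proposal is correct and follows the paper's argument. Step~1 is the paper's use of Theorem~\ref{thm:HD}, and Step~2 is its citation of \cite[Corollary~2.6(3) and (2.18)]{HYZ} (with \cite{JZ} as an alternative for critical almost Mathieu), which covers exactly the listed parameters, so your continuity-in-$\lambda$ fallback is not needed. One correction to your aside: the cited results are measure-theoretic, not a set-level comparison with the periodic spectra $\Sigma_{p_n/q_n}$. They give a spectral measure $\mu_f$ of $H_{\alpha,\theta,\lambda}$ with $\dim_{\mathrm P}\mu_f=1$, and the paper concludes via $1=\dim_{\mathrm P}\mu_f\le\dim_{\mathrm P}\Sigma_{\alpha,\lambda}\le\overline{\dim}_{\mathrm B}\Sigma_{\alpha,\lambda}\le 1$.
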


This distinction is important when interpreting the dimension suggested by
rational approximants or numerical calculations. The small dimension seen
along one sequence of scales need not be the box-counting dimension of the
spectrum: the limit defining the latter may simply fail to exist.

{\mg In Section \ref{sec:covers} we find covers by intervals for the spectrum $\Sigma_{\al,\lb}$
when $\al$ is irrational and $\lb\in{\mathcal C}_0$.
Following the approach of \cite{bjk},
we first reduce EHM for $\lb\in{\mathcal C}_0$
to singular Jacobi form. Then we need to consider blocks obtained from the Jacobi matrix by replacing by zero hopping coefficients where they are small. 
The blocks $B(x)$ so obtained will have two dimensions, $q_n$ and $q_{n-1}$ and will depend on a phase parameter $x$ which ranges over a small interval of length of order $1/q_n$ (with removed hopping coefficients of order $1/q_n^2$ for $\lb\in \mathcal B_{\rm iso}$). The spectral cover we seek will be a cover of the union of spectra of these blocks with added rank 2
small-norm perturbations over all such perturbations and over $x$. To apply the method of \cite{bjk}, we need to find a matrix $K$ such that $\mathcal E:=\frac{d}{dx}B(x)-[K(x),B(x)]$
has a bounded rank and norm. We find two such matrices: one for 
$\lb\in \mathcal L_{I}\cup \mathcal L_{II}\cup \mathcal L_{III}$, and another
for $\lb\in\mathcal I_{\rm iso}$. The required spectral covers of at most $q_n+q_{n-1}$ intervals
of total length $O(1/q_n)$ (improving to $O(1/q_n^2)$ for $\lb\in \mathcal B_{\rm iso}$)
then follow.
In Section \ref{sec:dimensions} we use these covers to prove the theorems stated above.

Similar results on fractal dimensions hold for the quantum graph graphene \cite{bjk3}.

}

{{\it{Acknowledgement and AI statement:}} 
The research of SJ was supported by NSF DMS-2052899, DMS2155211, and Simons 896624.}{We used AI to accelerate determination of the explicit formulae for matrices $K$,
and to help find the transform of Lemma \ref{lem:LII-gauge}. 
These constructions are ideologically close to those of \cite{bjk}.
We also used AI to assist in preparing an initial draft of this manuscript, which the authors subsequently rewrote in full. }

\section{Spectral covers for irrational flux}\label{sec:covers}

We  use the direct-integral notation
\[
 M_{\alpha,\lambda}=\int_\T^\oplus H_{\alpha,\theta,\lambda}\dd\theta,
 \qquad \Sigma_{\alpha,\lambda}=\spec(M_{\alpha,\lambda}).
\]
For irrational $\alpha$, $\Sigma_{\alpha,\lambda}$ coincides with the spectrum of every fiber \cite{as}.

In this section we will prove the following: 
\begin{theorem}\label{thm:all-scales}
Let $\al$ be irrational with canonical approximants $p_n/q_n$.
For $\lambda\in\mathcal C_0$ and every $n\geq4$, there is a
cover of $\Sigma_{\alpha,\lambda}$ by closed intervals $I_{n,j}$ such that
\begin{equation}\label{ex:eq:allscale-cover}
\Sigma_{\alpha,\lambda}\subset\bigcup_{j=1}^{M_n}I_{n,j},\qquad
M_n\leq q_n+q_{n-1},\qquad
\sum_j|I_{n,j}|\leq\frac{C\max\{1,\lambda_1+\lambda_3\}}{q_n}.
\end{equation}

If $\lb\in\mathcal L_{I},\mathcal L_{III}$,
a cover with the same interval count satisfies
\begin{equation}\label{eq:boundary-refined-cover}
\sum_j|I_{n,j}|\leq C\left(\frac{|\lb_1-\lb_3|}{q_n}+\frac{\lb_1+\lb_3}{q_n^2}\right).
\end{equation}
If $\lambda=(\lb_1,1,\lb_3)\in\mathcal L_{\mathrm {II}}$, the same interval count holds and
\begin{equation}\label{eq:LII-refined-cover}
 \sum_j|I_{n,j}|\leq C\left(\frac{\sqrt{1-4\lb_1\lb_3}}{q_n}+\frac1{q_n^2}\right).
\end{equation}
The constant $C>0$ above is absolute.
\end{theorem}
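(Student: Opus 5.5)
The plan follows the scheme of \cite{bjk} in four steps: singularize, approximate periodically, cut at the small hoppings, and control the spectral motion of the resulting blocks in $x$.

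\emph{Step 1: singular Jacobi representative.} For every $\lambda\in\mathcal C_0$ we first produce an operator $\wt H_{\alpha,\theta,\lambda}$, isospectral to $H_{\alpha,\theta,\lambda}$ after direct integration over $\theta$, whose hopping coefficient $\tilde c(\theta)$ vanishes at some phase.
\begin{itemize}
\item On $\mathcal L_{\mathrm{III}}$ and $\mathcal I_{\mathrm{iso}}$ nothing needs to be done. Indeed $c_{\lambda,\alpha}(t)=e^{-\pi i\alpha}e^{-2\pi i t}\bigl(\lambda_1+\lambda_2 w+\lambda_3 w^2\bigr)$ with $w=e^{2\pi i(t+\alpha/2)}$. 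When $\lambda_1+\lambda_3=\lambda_2$ this vanishes at $w=-1$. When $\lambda_1=\lambda_3$ and $\Lambda\ge\lambda_2$ it vanishes on the unit circle, with a double zero exactly when $\Lambda=\lambda_2$ or $\lambda_2=0$-type degeneracies occur, i.e.\ on $\mathcal B_{\mathrm{iso}}$.
\item On $\mathcal L_{\mathrm I}$ we apply Aubry duality, which maps it to $\mathcal L_{\mathrm{III}}$ up to rescaling.
\item On $\mathcal L_{\mathrm{II}}$ we need the new family of unitaries generalizing the chiral transform. I would look for a transform that pairs sites $2m,2m+1$ (or conjugates by a $\theta$-dependent $2\times 2$ block unitary). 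The goal is that the new off-diagonal coefficient becomes proportional to a trigonometric polynomial with a real zero, of size comparable to $\sqrt{1-4\lambda_1\lambda_3}$ near it, as suggested by \eqref{eq:LII-refined-cover}.
\end{itemize}
This step, the analogue of the lemma on the $\mathcal L_{\mathrm{II}}$ gauge, is where I expect the main obstacle. Finding the explicit transform and checking isospectrality of the direct integrals is the crux.

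\emph{Step 2: periodic approximation and cutting.} Replace $\alpha$ by $p_n/q_n$. By continuity of the spectrum in $\alpha$ (Hölder-$1/2$ for Jacobi operators, or better, as in \cite{bjk}), the cover of the periodic spectrum is enlarged by $O(|\alpha-p_n/q_n|^{1/2})\lesssim q_n^{-1}$ per band. Handling this carefully means using the three-gap/Denjoy structure of the phases $\theta+m\alpha$ modulo 1.

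We then partition the orbit by the zero $\theta_0$ of $\tilde c$. The points $\theta+m\alpha$ closest to $\theta_0$ appear with return times $q_n$ and $q_{n-1}$. At these sites the hopping is $O(1/q_n)$ for a simple zero, and $O(1/q_n^2)$ for the double zero on $\mathcal B_{\mathrm{iso}}$. Setting those entries to zero splits $\wt H$ into finite blocks $B(x)$ of sizes $q_n$ and $q_{n-1}$, with $x$ ranging over an interval of length $O(1/q_n)$. The removed part is a perturbation of norm $O(\max\{1,\Lambda\}/q_n)$, respectively $O(q_n^{-2})$. Hence $\Sigma_{\alpha,\lambda}$ is contained in the union over $x$ and over small rank-$2$ boundary perturbations of $\spec B(x)$.

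\emph{Step 3: a commutator identity.} Following \cite{bjk}, we construct an explicit matrix $K(x)$, tridiagonal-like and built from $\tilde c$, $\tilde c'$ and the diagonal, such that
\[
\mathcal E(x):=\tfrac{d}{dx}B(x)-[K(x),B(x)]
\]
has rank at most $2$ and bounded norm. We need one $K$ for $\mathcal L_{\mathrm I}\cup\mathcal L_{\mathrm{II}}\cup\mathcal L_{\mathrm{III}}$ and another for $\mathcal I_{\mathrm{iso}}$. The identity makes each eigenvalue $E_j(x)$ satisfy
\[
E_j'=\langle \psi_j,\mathcal E\psi_j\rangle,
\]
since the commutator term contributes nothing to the diagonal matrix element. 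Summing over $j$ gives $\sum_j|E_j'|\le \Tr|\mathcal E|=O(1)$, and after refinement $O(|\lambda_1-\lambda_3|)$ or $O(\sqrt{1-4\lambda_1\lambda_3})$. These refinements come from the size of the derivative of $\tilde c$ at the zero, plus lower-order terms.

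\emph{Step 4: assembling the cover.} Integrating over $x$ in an interval of length $O(1/q_n)$ shows that the eigenvalue trajectories sweep a total length $O(1/q_n)$. Adding the rank-$2$ perturbations enlarges each trajectory by the perturbation norm, by interlacing and Weyl's inequality. This yields at most $q_n+q_{n-1}$ intervals with the stated total lengths \eqref{ex:eq:allscale-cover}, \eqref{eq:boundary-refined-cover}, \eqref{eq:LII-refined-cover}. The $q_n^{-2}$ terms come from the cut perturbation and the second-order Taylor terms.
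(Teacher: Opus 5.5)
The architecture (singularize, cut at returns to the zero into blocks of lengths $q_n,q_{n-1}$, control eigenvalue motion through $\frac{d}{dx}B-[K,B]$) is the paper's. However, the way you pay for the cut bonds destroys the bound.

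In Step~4 you enlarge each eigenvalue trajectory of $B(x)$ by the norm $\tau$ of the rank-$2$ boundary perturbation. There are $q_n+q_{n-1}$ trajectories and $\tau$ is of order $q_n^{-1}$ at a simple zero, so this only bounds the total length by $O(1)$. At a double zero it gives $O(s/q_n)$, still too weak for $\mathcal B_{\mathrm{iso}}$. Interlacing does not rescue it, since it merely places the $j$-th perturbed eigenvalue between $\lambda_{j-2}(B)$ and $\lambda_{j+2}(B)$.

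The periodic approximation in Step~2 fails for the same reason ($q_n$ bands, each enlarged by up to $q_n^{-1}$), so it must be dropped. The paper never passes to $p_n/q_n$; it cuts the quasiperiodic operator itself (Proposition~\ref{prop:blocks}). The rank-$2$ corner form of the perturbation is the content of that proposition, not a consequence of the norm bound on the removed bonds.

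What is missing is the mechanism of Proposition~\ref{prop:block-cover}. Sandwich all perturbed blocks between $H^\pm=B\pm\bt P_N$, so that at a single phase $\sum_j\bigl(\lambda_j(H^+)-\lambda_j(H^-)\bigr)=\Tr(2\bt P_N)=4\bt$; the perturbation is then paid once, not once per eigenvalue. Then bound the $x$-variation of the eigenvalues of $H^\pm$ rather than of $B$. Since $\frac{d}{dx}H^\pm-[K,H^\pm]=\mathcal E\mp\bt[K,P_N]$, this needs the estimate $\norm{[K,P_N]}_{S_2}\le Cq_n$, which your plan never mentions. Multiplied by $\bt\le C(|d|/q_n+s/q_n^2)$ and integrated over $|J_n|<2/q_n$, this term stays within the target bound.

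Second, the constructions that carry the proof are not supplied.

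\emph{The $\mathcal L_{\mathrm{II}}$ transform.} No pairing of sites or block unitary is needed. On $|z|=1$ one has $|az^{-1}+1+bz|=|r_+z^{-1}+\Lambda+r_-z|$ with $r_\pm=\frac{1\pm\sqrt{1-4ab}}2$, obtained by reflecting a root of the quadratic across the unit circle. Hence a diagonal phase gauge conjugates $H_{\al,\th,(a,1,b)}$ to $H_{\al,\th,(r_+,\Lambda,r_-)}$. Aubry duality then turns this into potential amplitude $\Lambda$ with hopping $z^{-1}(1+z)(r_-+r_+z)$, which vanishes at $z=-1$ because $r_++r_-=1$.

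\emph{The matrix $K$.} In Step~3, $K$ cannot be tridiagonal. Already for the critical almost Mathieu representative $v=0$, $b(x)=2\sin\pi x$, the first off-diagonal of $[K,B]$ would see only the imaginary diagonal of $K$ and could not equal $b'(t_j)$. The paper's $K_N(x)$ is a full antisymmetric matrix, $(K_N)_{j,j+r}=\pi(-1)^{r-1}\chi_r(t_j)/\sin(\pi\al r)$. Here $\chi_r$ is the half-trace of a product of the $\mathrm{SU}(2)$ matrices $M(t)=C(\pi t)/w(\pi t)$, so $|\chi_r|\le1$. Combined with $\sum_{r<q_n}\sin^{-2}(\pi\al r)\le Cq_n^2$, this gives $\norm{[K_N,P_N]}_{S_2}\le Cq_n$. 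The resulting $\mathcal E_N$ lives on the first and last rows and columns, so its rank is up to $4$, not $2$. Its entries are controlled by $b(t_{-1})$, $b(t_{N-1})$, $\sin\pi(t_{-1}+t_{N-1})$ and $d$. This yields $\norm{\mathcal E_N}_{S_2}\le C(s/q_n+|d|)$ because both block ends lie in $J_n$.

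\emph{The double zero.} On the part of $\mathcal B_{\mathrm{iso}}$ with $\lb_2<1$, the original hopping $\lb_2+\cos(2\pi(t+\al/2))$ has only simple zeros; the double zero appears only after duality, with $d=0$. The $\mathrm{SU}(2)$ construction needs $d\ne0$, so the $q_n^{-2}$ bounds at $d=0$ require the separate $\ep$-perturbation and limiting argument.
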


{
The Hausdorff-measure bounds of Theorem~\ref{thm:HD} now follow immediately from
H\"older's inequality. Indeed, if $M_n=O(q_n)$ and
$\sum_j|I_{n,j}|=O(q_n^{-\varkappa})$, then
\[
 \sum_j|I_{n,j}|^{1/(1+\varkappa)}
 \le M_n^{\varkappa/(1+\varkappa)}
       \left(\sum_j|I_{n,j}|\right)^{1/(1+\varkappa)}=O(1).
\]
The largest interval length tends to zero, so $\varkappa=1,2$ give finite
Hausdorff measure at exponents $1/2,1/3$, respectively. The box-dimension
bounds follow from Proposition~\ref{prop:dimension} below.
\par}

%

{
For the square and triangular lattice models we use the notation
\par}
\begin{equation}\label{eq:special-couplings}
 \lambda_{\mathrm A}=(0,1,0),\qquad \lambda_{\mathrm T}=(1,1,0),
 \qquad
 \Sigma_\alpha^{\mathrm A}=\Sigma_{\alpha,\lambda_{\mathrm A}},\quad
 \Sigma_\alpha^{\mathrm T}=\Sigma_{\alpha,\lambda_{\mathrm T}}.
\end{equation}
{
In particular, $H_{\alpha,\theta,\lambda_{\mathrm A}}$ is the critical almost
Mathieu operator, and $M_{\alpha,\lambda_{\mathrm A}}=M_\alpha$ in the notation of
\cite{bjk}.
\par}

\subsection{Reduction to singular Jacobi families}\label{sec:singular-structure}
\leavevmode{
Let $J$ be the bounded Jacobi operator
\[
 (Ju)_j=\overline{b_{j-1}}u_{j-1}+v_ju_j+b_ju_{j+1},
 \qquad v_j\in\R,\quad b_j\in\C.
\]
For continuous sampling functions $v:\T\to\R$ and $b:\T\to\C$, we set
$v_j=v(\theta+j\alpha)$ and $b_j=b(\theta+j\alpha)$. When $b$ is real, 
$J=H_{v,b,\alpha,\theta}$ of \cite{bjk}. We call the family singular if $b$ has
a zero. For irrational $\alpha$, singularity gives
\[
 \inf_{j\in\Z}|b(\theta+j\alpha)|=0
\]
for every $\theta$. Deleting the bond between $j$ and $j+1$ changes the
operator by
\[
 J-J^{\rm cut}=b_j e_je_{j+1}^*+\overline b_j e_{j+1}e_j^*,
 \qquad \norm{J-J^{\rm cut}}=|b_j|,
\]
where $e_j$ is the standard basis vector. As in \cite{bjk}, we will use
continued-fraction returns to choose cuts where these coefficients are small,
and construct covers for the spectrum of perturbed blocks.
{\mg Thus, we first need to see if the family (in $\th$) of $H_{\al,\th,\lb}$ can be reduced to a family of singular Jacobi matrices.}

As we discuss below the families for $\lb\in\mathcal L_{III},\mathcal I_{\mathrm{iso}}$ are already singular. We will now reduce the families for
$\lb\in\mathcal L_{I},\mathcal L_{II}$ to singular form by using the duality
transform for $\lb\in\mathcal L_{I}$, and a chiral-type transform for $\lb\in\mathcal L_{II}$.

\par}

Aubry duality exchanges the potential amplitude and the constant hopping.
We therefore introduce a potential amplitude $\nu\ge0$ in \eqref{eq:c}:

\begin{equation}\label{extH}
 \begin{split}
 (H_{\nu,\alpha,\theta,\lambda}u)_j
 ={}&\overline{c_{\lambda,\alpha}(\theta+(j-1)\alpha)}u_{j-1}
     +c_{\lambda,\alpha}(\theta+j\alpha)u_{j+1}\\
 &+2\nu\cos(2\pi(\theta+j\alpha))u_j.
 \end{split}
\end{equation}

We write
\[
 \mathcal S_{\nu,\lambda}(\alpha)
 =\bigcup_{\theta\in\T}\spec(H_{\nu,\alpha,\theta,\lambda}).
\]
By norm continuity in $\theta$, this is also the spectrum of the direct
integral. For irrational $\alpha$, application of the unitary transform 
$(Tu)_j=u_{j+1}$ shows that it is the spectrum of every fiber. Note that
$\mathcal S_{1,\lambda}(\alpha)=\Sigma_{\alpha,\lambda}$.

We first recall the usual Aubry duality, allowing also $\lambda_2=0$.

\begin{lemma}\label{lem:lattice-duality}
For every $\alpha\in\R$, $\nu\ge0$, and $\lambda=(\lb_1,\lb_2,\lb_3)\in[0,\infty)^3$, the
families $H_{\nu,\alpha,\theta,\lambda}$ and
$H_{\lambda_2,\alpha,\theta,(\lambda_3,\nu,\lambda_1)}$ have unitarily equivalent
direct integrals. In particular,
\begin{equation}\label{ex:eq:changes}
 \mathcal S_{\nu,(\lambda_1,\lambda_2,\lambda_3)}(\alpha)
 =\mathcal S_{\lambda_2,(\lambda_3,\nu,\lambda_1)}(\alpha).
\end{equation}
For $\nu=1$ and $\lambda_2>0$, setting
$\lambda^\vee=(\lambda_3/\lb_2,1/\lb_2,\lambda_1/\lb_2)$ gives
$\Sigma_{\alpha,\lambda}=\lambda_2\Sigma_{\alpha,\lambda^\vee}$.
\end{lemma}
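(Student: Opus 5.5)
The plan is the standard Aubry duality argument, carried out fiberwise on the direct integral $L^2(\T\times\Z)$. The first step is to identify $M=\int_\T^\oplus H_{\nu,\alpha,\theta,\lambda}\,d\theta$ with an operator on $L^2(\T\times\Z)\cong\ell^2(\Z^2)$, via Fourier series in $\theta$: write $\psi(\theta,j)=\sum_k \hat\psi(k,j)e^{2\pi i k\theta}$. Multiplication by $e^{\pm2\pi i(\theta+j\alpha)}$ becomes a shift $k\mapsto k\mp1$ together with the phase $e^{\pm2\pi i j\alpha}$. In these variables $M$ is a two-dimensional magnetic operator on $\Z^2$, and its terms have the following form:
\begin{itemize}
\item the potential $2\nu\cos$ gives hopping of strength $\nu$ in the $k$ direction;
\item the constant part $\lambda_2$ of $c$ gives hopping of strength $\lambda_2$ in the $j$ direction;
\item the terms $\lambda_1e^{-2\pi i(\cdot)}$ and $\lambda_3e^{2\pi i(\cdot)}$ multiplying the shift $j\to j+1$ give diagonal hoppings, along $(j,k)\to(j+1,k+1)$ and $(j+1,k-1)$ respectively.
\end{itemize}
All of these come with Peierls phases.

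The second step is to exchange the roles of $j$ and $k$. I would use the unitary $U$ on $\ell^2(\Z^2)$ given by $(U\phi)(k,j)=\phi(j,k)$, possibly composed with the reflection $k\mapsto -k$ and a gauge multiplication $e^{2\pi i jk\alpha}$, or a symmetric variant $e^{\pi i(\ldots)\alpha}$ that absorbs the $\alpha/2$ shifts in \eqref{eq:c}. After this, the $k$-hopping of strength $\nu$ becomes the constant hopping, and the $j$-hopping of strength $\lambda_2$ becomes the cosine potential. Applying the inverse partial Fourier transform in the new transverse variable then produces $\int^\oplus H_{\lambda_2,\alpha,\theta,(\lambda',\nu,\lambda'')}\,d\theta$.

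The only real check, which I expect to be the main obstacle, is the bookkeeping of phases. One must confirm that the diagonal bonds come out with exactly the coefficients $\lambda_3e^{-2\pi i(t+\alpha/2)}$ and $\lambda_1e^{2\pi i(t+\alpha/2)}$. Equivalently, the reflection swaps the two diagonals, giving the order $(\lambda_3,\nu,\lambda_1)$, and the half-shift $\alpha/2$ is reproduced correctly. I would verify this by writing the matrix element of each diagonal bond before and after the transform, and choosing the gauge $e^{2\pi i\alpha(jk+\text{linear})}$ to match. Any leftover constant phase can be removed by a translation in $\theta$, which preserves the direct integral up to unitary equivalence. Since the spectrum of a direct integral is invariant under unitary equivalence, \eqref{ex:eq:changes} follows. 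By norm continuity in $\theta$, that common spectrum is the union over fibers.

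Finally, for $\nu=1$ and $\lambda_2>0$, I would use linearity: $H_{\lambda_2,\alpha,\theta,(\lambda_3,1,\lambda_1)}=\lambda_2 H_{1,\alpha,\theta,\lambda^\vee}$, because both $c$ and the potential scale linearly in the couplings. Combined with $\mathcal S_{1,\lambda}=\Sigma_{\alpha,\lambda}$, this gives $\Sigma_{\alpha,\lambda}=\lambda_2\Sigma_{\alpha,\lambda^\vee}$.
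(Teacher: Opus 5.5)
Your proposal is correct and is essentially the paper's argument. The paper's unitary $(U\varphi)(\theta,j)=\sum_k e^{-2\pi ik(\theta+j\alpha)}\int_\T e^{-2\pi ijx}\varphi(x,k)\,dx$ is exactly your composition of three steps: the Fourier transform in $\theta$, the swap of the two indices with the reflection $k\mapsto-k$, and the gauge $e^{-2\pi ijk\alpha}$.

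The phase bookkeeping you flag is handled there algebraically. The paper writes the operator as $\nu(S+S^{-1})+\lambda_2(T+T^{-1})+\lambda_1e^{-\pi i\alpha}(S^{-1}T+ST^{-1})+\lambda_3e^{\pi i\alpha}(ST+S^{-1}T^{-1})$ and uses $US=T^{-1}U$, $UT=SU$, $TS=e^{2\pi i\alpha}ST$. These relations match the diagonal terms exactly, with $\lambda_1\leftrightarrow\lambda_3$, and leave no residual phase. So your fallback of removing a constant phase by a $\theta$-translation is never needed; such a translation would not isolate a single bond's phase anyway.
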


\begin{proof}
This is the same Aubry duality calculation as in \cite[Theorem~3.1]{GJLS}.
On $L^2(\T;\ell^2(\Z))$, let
\[
 (T\varphi)(\theta,j)=\varphi(\theta,j+1),\qquad
 (S\varphi)(\theta,j)=e^{2\pi i(\theta+j\alpha)}\varphi(\theta,j).
\]
These operators satisfy the commutation relation $TS=e^{2\pi i\alpha}ST$.
Applying $S^{-1}$ to it from both sides yields $S^{-1}T=e^{2\pi i\alpha}TS^{-1}$,
and similarly applying $T^{-1}$ we have $ST^{-1}=e^{2\pi i\alpha}T^{-1}S$.

The Fourier transform
\[
 (U\varphi)(\theta,j)=\sum_{k\in\Z}e^{-2\pi ik(\theta+j\alpha)}
        \int_\T e^{-2\pi ijx}\varphi(x,k)\dd x
\]
is unitary, with the sum interpreted in $L^2$, and satisfies
$US=T^{-1}U$ and $UT=SU$ (and consequently, by taking inverses and applying $U$,  $US^{-1}=TU$,
$UT^{-1}=S^{-1}U$) exactly as in that proof. 
We now write \eqref{extH} in terms of $S$ and $T$,

We have
\[
\begin{aligned}
 H_{\nu,\al,\th,\lambda}={}&\nu(S+S^{-1})+\lambda_2(T+T^{-1})\\
 &+\lambda_1e^{-\pi i\alpha}(S^{-1}T+ST^{-1})
  +\lambda_3e^{\pi i\alpha}(ST+S^{-1}T^{-1}).
\end{aligned}
\]
Let $Q_{\nu,\lambda}=\int_\T^\oplus H_{\nu,\alpha,\theta,\lambda}\dd\theta$.
The above commutation relations then give
\[
 UQ_{\nu,\lambda}=Q_{\lambda_2,(\lambda_3,\nu,\lambda_1)}U,
\]
which proves the claim. The relation $\Sigma_{\alpha,\lambda}=\lambda_2\Sigma_{\alpha,\lambda^\vee}$ for $\nu=1$, $\lb_2>0$, follows by factoring out $\lambda_2$.
\end{proof}


As we discuss below, if $\lb\in\mathcal L_{I}$, the family for $\lambda^\vee$
is singular.

The remaining step is to make $H$ for $\lb$ in the self-dual triangle (see Figure \ref{fig:parameter-regions}) of $\mathcal L_{II}$ singular. We use a
diagonal phase gauge, which allows to replace a hopping by any other hopping with
the same modulus; see \cite[Section~7, footnote~9]{AJM} and
\cite[(1.57) and Lemma~1.6]{Teschl}. The identity below is a simple instance
of reflecting a polynomial factor across the unit circle
\cite[Section~3]{BeinertPlonka}. Combined with Aubry duality, it gives the
desired singular representative.

\begin{lemma}\label{lem:LII-gauge}
Let $\lambda=(a,1,b)\in\mathcal L_{\mathrm{II}}$, and set
\be
 \Lambda=a+b,\qquad r_\pm=\frac{1\pm\sqrt{1-4ab}}2.
\ee
For every $\alpha,\theta$, there is a diagonal unitary
$G_{\alpha,\theta}$ such that
\begin{equation}\label{eq:LII-gauge}
 G_{\alpha,\theta}^*H_{\alpha,\theta,(a,1,b)}G_{\alpha,\theta}
 =H_{\alpha,\theta,(r_+,\Lambda,r_-)}.
\end{equation}
\end{lemma}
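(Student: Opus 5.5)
The plan is to compare the two hopping functions on the unit circle. The diagonal potential $2\cos(2\pi(\theta+j\alpha))$ is the same on both sides of \eqref{eq:LII-gauge}, so it suffices to show two things: first, that $|c_{(a,1,b),\alpha}(t)|=|c_{(r_+,\Lambda,r_-),\alpha}(t)|$ for all real $t$; second, that equal moduli of the hoppings can be absorbed by a diagonal unitary. The second point is the standard gauge argument cited before the lemma. Let $G=\diag(g_j)$ with $|g_j|=1$. Then $G^*HG$ has the same diagonal as $H$, and its hopping $c_j=c(\theta+j\alpha)$ becomes $\overline{g_j}g_{j+1}c_j$. If $c'_j=c'(\theta+j\alpha)$ satisfies $|c'_j|=|c_j|$, write $c'_j=e^{i\varphi_j}c_j$, taking $\varphi_j=0$ when $c_j=0$. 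Set $g_0=1$ and define $g_{j+1}=e^{i\varphi_j}g_j$ for all $j\in\Z$, recursively in both directions. The lower hopping is handled automatically, since it is the complex conjugate, so $G^*H_{\alpha,\theta,(a,1,b)}G=H_{\alpha,\theta,(r_+,\Lambda,r_-)}$. No measurability in $\theta$ is needed, because the statement is fiberwise.

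For the modulus identity, put $z=e^{2\pi i(t+\alpha/2)}$, so that
\[
c_{(a,1,b),\alpha}(t)=az^{-1}+1+bz=bz^{-1}(z-z_1)(z-z_2)
\]
when $b>0$. Since $\lambda\in\mathcal L_{\mathrm{II}}$ gives $a+b\le1$, we have $4ab\le(a+b)^2\le1$. Hence $r_\pm$ are real and nonnegative, with $r_++r_-=1$ and $r_+r_-=ab$. The roots of $bz^2+z+a$ are then $-r_+/b$ and $-r_-/b$, both real and negative.

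The key step is to reflect one root across the unit circle. For $|z|=1$ and real $w\neq0$ we have $|z-w|=|w|\,|z-1/w|$. I will choose which root to reflect so as to land on $(r_+,\Lambda,r_-)$ rather than on its mirror $(r_-,\Lambda,r_+)$. As a check, reflecting $z_1=-r_+/b$ gives the Laurent polynomial
\[
r_+z^{-1}(z-z_2)(z-1/z_1)=r_-z^{-1}+\bigl(r_+r_-/b+b\bigr)+r_+z,
\]
whose middle coefficient is $a+b=\Lambda$. So this choice produces $(r_-,\Lambda,r_+)$. Reflecting the other root $z_2=-r_-/b$ instead gives, by the symmetric computation, $r_+z^{-1}+\Lambda+r_-z$, which is exactly $c_{(r_+,\Lambda,r_-),\alpha}(t)$.

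The degenerate cases have to be treated separately. If $ab=0$, then $r_-=0$, and one root is at $0$ or $\infty$, so the reflection must be done directly. For instance, if $b=0$ then $|az^{-1}+1|=|a+z|=|z^{-1}+a|$, using that $a$ is real and $|z|=1$; the case $a=0$ is symmetric. If $r_-=0$ with $a=0$ or $b=0$, the roles of the two coefficients may need to be swapped, which is again a one-line check. I expect the only real obstacle to be bookkeeping, namely choosing the correct root to reflect so that the orientation $(r_+,\Lambda,r_-)$ comes out rather than its mirror image. The fallback is that the maps $z\mapsto\bar z=z^{-1}$ and complex conjugation both preserve the modulus on $|z|=1$, so either orientation is available.
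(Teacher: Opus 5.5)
Your proof is correct. Its structure matches the paper's: first prove $|c_{(a,1,b),\alpha}|=|c_{(r_+,\Lambda,r_-),\alpha}|$ on the orbit, then absorb the phases with the recursively defined diagonal gauge $g_{j+1}=e^{i\varphi_j}g_j$. This is the same as the paper's $g_{j+1}=g_jc^{(2)}_j/c^{(1)}_j$.

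The only real difference is how you verify the modulus identity. The paper expands both squared moduli on $|z|=1$ as $1+a^2+b^2+2\Lambda\operatorname{Re}z+2ab\operatorname{Re}(z^2)$, using only $r_++r_-=1$ and $r_+r_-=ab$. That expansion is symmetric in $r_\pm$, needs no factorization, and covers $ab=0$ with no case split.

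You instead factor $bz^2+z+a$ and reflect the root $-r_-/b$ across the unit circle. This is exactly the mechanism the paper alludes to when it cites reflection of polynomial factors. Your route has the advantage of explaining where $r_\pm$ come from: up to the factor $-b$, they are the roots. The cost is the orientation bookkeeping and a separate treatment of $ab=0$, where a root sits at $0$ or $\infty$.

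As your fallback remark already observes, the orientation issue is moot. The coefficients are real, so complex conjugation gives $|r_-z^{-1}+\Lambda+r_+z|=|r_+z^{-1}+\Lambda+r_-z|$ on $|z|=1$.

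You are also right that measurability in $\theta$ is not needed for this fiberwise lemma. The paper records it only because the lemma is used afterwards for the direct integrals in Theorem~\ref{thm:LII-singular}.
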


\begin{proof}
Since $r_++r_-=1$ and $r_+r_-=ab$, for $|z|=1$,
\[
\begin{aligned}
 |az^{-1}+1+bz|^2
 &=1+a^2+b^2+2\Lambda\operatorname{Re}z
       +2ab\operatorname{Re}(z^2)\\
 &=|r_+z^{-1}+\Lambda+r_-z|^2.
\end{aligned}
\]
Therefore, the hopping moduli coincide:
$|c_{\lb,\al}(\th+j\al)|=|c_{(r_+,\Lambda,r_-),\al}(\th+j\al)|$.
If this modulus is nonzero, denote $c_j^{(1)}=\exp\{i\arg c_{\lb,\al}(\th+j\al)\}$, $c_j^{(2)}=\exp\{i\arg c_{(r_+,\Lambda,r_-),\al}(\th+j\al)\}$,
otherwise set $c_j^{(1)}=c_j^{(2)}=1$.
Let
\be
 g_{j+1}=
 g_jc_j^{(2)}/c_j^{(1)},\qquad j\in\Z;\qquad g_0=1.
\ee
Then $(G_{\alpha,\theta}u)_j=g_ju_j$ gives
\eqref{eq:LII-gauge}. The gauges are measurable in $\theta$,
so they also give unitary equivalence of the direct integrals.
\end{proof}

\begin{theorem}\label{thm:LII-singular}
With the notation of \eqref{extH} and Lemma \ref{lem:LII-gauge},
for every $\alpha\in\R$ and $\lambda\in\mathcal L_{\mathrm{II}}$,
$M_{\alpha,\lambda}$ is unitarily equivalent to the direct integral
of a singular Jacobi family $H_{\Lambda,\alpha,\theta,(r_-,1,r_+)}$.
\end{theorem}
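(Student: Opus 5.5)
The plan is to compose the two unitary equivalences already established, the gauge of Lemma~\ref{lem:LII-gauge} followed by the Aubry duality of Lemma~\ref{lem:lattice-duality}, and then to check directly that the resulting hopping function vanishes somewhere on $\T$.

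\emph{Step 1 (gauge).} Let $\lambda=(a,1,b)\in\mathcal L_{\mathrm{II}}$. In the notation \eqref{extH}, $H_{\alpha,\theta,\lambda}=H_{1,\alpha,\theta,(a,1,b)}$. Lemma~\ref{lem:LII-gauge} gives diagonal unitaries $G_{\alpha,\theta}$, measurable in $\theta$, with $G_{\alpha,\theta}^*H_{1,\alpha,\theta,(a,1,b)}G_{\alpha,\theta}=H_{1,\alpha,\theta,(r_+,\Lambda,r_-)}$. The conjugation acts only on the hoppings; the diagonal potential $2\cos$ is untouched, so $\nu=1$ is preserved. Set $G=\int_\T^\oplus G_{\alpha,\theta}\dd\theta$. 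By measurability, $G$ is a unitary on $L^2(\T;\ell^2(\Z))$, and
\[
G^*M_{\alpha,\lambda}G=\int_\T^\oplus H_{1,\alpha,\theta,(r_+,\Lambda,r_-)}\dd\theta .
\]

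\emph{Step 2 (duality).} Apply Lemma~\ref{lem:lattice-duality} with $\nu=1$ and coupling $(\lambda_1,\lambda_2,\lambda_3)=(r_+,\Lambda,r_-)$. The lemma allows $\lambda_2=\Lambda=0$, which occurs when $a=b=0$. It shows that $\int^\oplus H_{1,\alpha,\theta,(r_+,\Lambda,r_-)}\dd\theta$ is unitarily equivalent, via the Fourier transform $U$, to
\[
\int_\T^\oplus H_{\Lambda,\alpha,\theta,(r_-,1,r_+)}\dd\theta .
\]
Composing $U$ with $G$ gives the claimed unitary equivalence with $M_{\alpha,\lambda}$. Neither lemma assumes irrationality, so the conclusion holds for every $\alpha\in\R$.

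\emph{Step 3 (singularity).} The hopping of the new family is
\[
c_{(r_-,1,r_+),\alpha}(t)=r_-z^{-1}+1+r_+z,\qquad z=e^{2\pi i(t+\alpha/2)}.
\]
At $z=-1$, that is at $t=1/2-\alpha/2$, this equals $1-r_--r_+=0$, since $r_++r_-=1$. So $c$ has a zero on the period and the family is singular. In the isotropic endpoint $ab=1/4$ we get $r_\pm=1/2$, and then $c=\tfrac12(z^{1/2}+z^{-1/2})^2$ has a double zero. This is worth recording for the later $1/3$ improvement, although it is not needed here.

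None of these steps is really an obstacle. The only point needing care is Step~1, where one must check that the fiberwise gauges assemble into a unitary on the direct integral. This follows from their measurability in $\theta$, which is noted in the proof of Lemma~\ref{lem:LII-gauge}.
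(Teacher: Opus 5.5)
Your proposal is correct and matches the paper's proof: conjugate by the gauge of Lemma~\ref{lem:LII-gauge} to reach $H_{1,\alpha,\theta,(r_+,\Lambda,r_-)}$, apply the Aubry duality of Lemma~\ref{lem:lattice-duality} to obtain $H_{\Lambda,\alpha,\theta,(r_-,1,r_+)}$, and note that its hopping vanishes at $t=1/2-\alpha/2$ because $r_++r_-=1$. Your extra remarks are accurate and only make explicit details the paper leaves implicit: measurability of the fiberwise gauges, the degenerate case $a=b=0$, and the double zero when $ab=1/4$.
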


\begin{proof}
Apply Lemma~\ref{lem:LII-gauge}, followed by the Aubry duality
(Lemma~\ref{lem:lattice-duality}). The resulting family is
$H_{\Lambda,\alpha,\theta,(r_-,1,r_+)}$, whose hopping vanishes
at $t=1/2-\alpha/2$, since $r_-+r_+=1$.
\end{proof}

\begin{corollary}\label{cor:singular-representatives}
For every $\alpha\in\R$ and $\lambda\in\mathcal C_0$,
$M_{\alpha,\lambda}$ is unitarily equivalent to the direct integral
of the following singular Jacobi family at frequency $\alpha$ with $b(0)=0$:
\be\label{sing}
(H_{v,b,\alpha,\theta}\phi)(n)=b(\th + (n-1)\al)\phi(n-1)+b(\th+n\al)
\phi(n+1)+v(\th+n\al)\phi(n),
\ee
where for $\lb\in \mathcal L_{I}, \mathcal L_{II},\mathcal L_{III}$,
\be\label{vb}
v(x)=-2\nu \cos(2\pi x-\pi\al),\qquad b(x)=2w(\pi x) \sin(\pi x),\quad w(t)=\sqrt{s^2\sin^2 t + d^2 \cos^2 t},
\ee
with parameters 
\be\label{param}\begin{cases}
\nu=\lb_2, s=\lb_1+\lb_3=1, d=\lb_3-\lb_1 & \mbox{for $\mathcal L_{\mathrm I}$},\\ 
\nu=\lb_1+\lb_3, s=1, d=-\sqrt{1-4\lb_1\lb_3} & \mbox{for $\mathcal L_{\mathrm{II}}$},\\
\nu=1, s=\lb_1+\lb_3=\lb_2, d=\lb_1-\lb_3 & \mbox{for $\mathcal L_{\mathrm{III}}$.}
\end{cases}
\ee
For $\lb\in \mathcal I_{\mathrm{iso}}$,
\be\label{vb-iso}
v(x)=-2\cos(2\pi x-\pi\al+\ga),\qquad b(x)=\lb_2-2\lb_1\cos(2\pi x +\ga),\qquad
\ga=\arccos\frac{\lb_2}{2\lb_1}.
\ee
\end{corollary}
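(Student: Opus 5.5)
The plan is to treat the four pieces of $\mathcal C_0$ separately. In each case I first reach a family of the form $H_{\nu,\alpha,\theta,\lambda'}$ that is already singular. I then apply one common computation: a diagonal gauge together with a phase shift $\theta\mapsto\theta+\theta_0$, which turns it into the real form \eqref{sing} with $b(0)=0$. The shift only reparametrizes the direct integral, and measurable diagonal gauges preserve unitary equivalence of direct integrals, as in the proof of Lemma~\ref{lem:LII-gauge}.

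\textbf{Reduction to a singular family.} For $\mathcal L_{\mathrm{III}}$ ($\Lambda=\lambda_2\ge1$, $\nu=1$) there is nothing to do. For $\mathcal L_{\mathrm I}$, Lemma~\ref{lem:lattice-duality} turns $H_{1,\alpha,\theta,(\lambda_1,\lambda_2,\lambda_3)}$ into $H_{\lambda_2,\alpha,\theta,(\lambda_3,1,\lambda_1)}$, and in the new family the constant hopping equals $\lambda_1+\lambda_3=1$. For $\mathcal L_{\mathrm{II}}$, Theorem~\ref{thm:LII-singular} gives $H_{\Lambda,\alpha,\theta,(r_-,1,r_+)}$ with $r_-+r_+=1$. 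For $\mathcal I_{\mathrm{iso}}$ I keep $\nu=1$ and $\lambda_1=\lambda_3$.

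\textbf{The modulus computation.} For a triple $(\mu_1,\mu_2,\mu_3)$, set $z=e^{2\pi i(t+\alpha/2)}$. A computation like the one in Lemma~\ref{lem:LII-gauge} gives
\[
|\mu_1z^{-1}+\mu_2+\mu_3z|^2=\mu_2^2+(\mu_1+\mu_3)^2\cos^2\phi+2\mu_2(\mu_1+\mu_3)\cos\phi+(\mu_3-\mu_1)^2\sin^2\phi,
\]
where $\phi=2\pi(t+\alpha/2)$. When $\mu_2=\mu_1+\mu_3=s$, this equals
\[
s^2(1+\cos\phi)^2+d^2\sin^2\phi=4\cos^2(\phi/2)\bigl(s^2\cos^2(\phi/2)+d^2\sin^2(\phi/2)\bigr),
\]
with $d=\pm(\mu_3-\mu_1)$. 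I then substitute $x=t+\alpha/2+1/2$, so that $\pi x=\phi/2+\pi/2$. The modulus becomes $|2\sin(\pi x)|\,w(\pi x)$ with $w$ as in \eqref{vb}.

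\textbf{Getting the real form.} To make the hopping real and signed rather than only equal in modulus, I choose the gauge so that $c_j$ is replaced by $b(x_j)=2w(\pi x_j)\sin(\pi x_j)$. Any choice of phases with the right moduli is achievable by a diagonal gauge, by the same construction as in Lemma~\ref{lem:LII-gauge}. Under this shift the potential $2\nu\cos(2\pi t)$ becomes $2\nu\cos(2\pi x-\pi\alpha-\pi)=-2\nu\cos(2\pi x-\pi\alpha)$, which matches \eqref{vb}. Reading off $\mu$ in each case gives \eqref{param}:
\begin{itemize}
\item for $\mathcal L_{\mathrm I}$, $\mu=(\lambda_3,1,\lambda_1)$, so $s=1$ and $d=\lambda_1-\lambda_3$ up to sign;
\item for $\mathcal L_{\mathrm{II}}$, $d=\pm(r_+-r_-)=\pm\sqrt{1-4\lambda_1\lambda_3}$;
\item for $\mathcal L_{\mathrm{III}}$, $s=\lambda_2$.
\end{itemize}
The sign of $d$ is irrelevant because only $d^2$ enters $w$.

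\textbf{The isotropic case.} With $\mu=(\lambda_1,\lambda_2,\lambda_1)$ the hopping $\lambda_2+2\lambda_1\cos\phi$ is already real. It vanishes exactly when $\cos\phi=-\lambda_2/(2\lambda_1)$, and such $\phi$ exist because $2\lambda_1=\Lambda\ge\lambda_2$. Shifting $\phi=2\pi x+\gamma+\pi$, so that $\cos\phi=-\cos(2\pi x+\gamma)$ with $\gamma=\arccos(\lambda_2/2\lambda_1)$, gives $b(x)=\lambda_2-2\lambda_1\cos(2\pi x+\gamma)$. This satisfies $b(0)=\lambda_2-2\lambda_1\cdot\lambda_2/(2\lambda_1)=0$. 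The potential transforms to $-2\cos(2\pi x-\pi\alpha+\gamma)$, matching \eqref{vb-iso}.

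The main obstacle is bookkeeping rather than ideas. One has to check that one common shift of $\theta$ makes both $v$ and $b$ take exactly the stated forms, including the $-\pi\alpha$ offset and the sign of the potential. One also has to confirm that the gauge is applied to the right indexing of $b_{j-1}$ versus $b_j$ in \eqref{sing} compared with \eqref{extH}. I would verify this by direct substitution rather than rely on sign conventions.
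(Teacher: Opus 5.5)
Your proposal is correct and follows essentially the same route as the paper: the same reductions (nothing to do for $\mathcal L_{\mathrm{III}}$, Aubry duality for $\mathcal L_{\mathrm I}$, Theorem~\ref{thm:LII-singular} for $\mathcal L_{\mathrm{II}}$, the original real family for $\mathcal I_{\mathrm{iso}}$), followed by a diagonal gauge and the phase shift $t\mapsto t-\alpha/2-1/2$ (respectively $t\mapsto t-\alpha/2+1/2+\gamma/(2\pi)$ on $\mathcal I_{\mathrm{iso}}$). The only cosmetic difference is in how the real hopping is obtained: you match moduli and invoke the gauge construction of Lemma~\ref{lem:LII-gauge}, whereas the paper factors $c_{\tilde\lambda,\alpha}(t-\alpha/2-1/2)=2w(\pi t)\sin(\pi t)\,e^{i\arg(s\sin\pi t+id\cos\pi t)}$ explicitly and gauges away the unimodular factor, which also fixes the sign $d=\tilde\lambda_1-\tilde\lambda_3$; as you correctly note, that sign does not affect $w$.
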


\begin{proof}
On the three critical boundaries, choose
\begin{equation}\label{ex:eq:representatives}
\begin{aligned}
 \lambda\in\mathcal L_{\mathrm{III}}:
 &\quad \nu=1,\quad \widetilde\lambda=\lambda,\\
 \lambda\in\mathcal L_{\mathrm I}:
 &\quad \nu=\lambda_2,\quad
       \widetilde\lambda=(\lambda_3,1,\lambda_1),\\
 \lambda=(a,1,b)\in\mathcal L_{\mathrm{II}}:
 &\quad \nu=a+b,\quad
       \widetilde\lambda=(r_-,1,r_+),
\end{aligned}
\end{equation}
where $r_\pm$ are defined in Lemma~\ref{lem:LII-gauge}.
These are, respectively, the original family, its Aubry dual,
and the family constructed in Theorem~\ref{thm:LII-singular}.
In each case,
\be\label{lambdanew}
 \mathcal S_{1,\lambda}(\alpha)
 =\mathcal S_{\nu,\widetilde\lambda}(\alpha),\qquad
 0\le\nu\le\widetilde\lambda_2
 =\widetilde\lambda_1+\widetilde\lambda_3>0.
\ee
Writing $z=e^{2\pi i(t+\alpha/2)}$, we have $c_{\widetilde\lambda,\alpha}(t)
 =z^{-1}(1+z)(\widetilde\lambda_1+\widetilde\lambda_3z),
$ so the hopping vanishes at $t=1/2-\alpha/2$.
We further denote
\[
s=\wt\lb_1+\wt\lb_3,\qquad d=\wt\lb_1-\wt\lb_3,
\]
and write
\be
\wh b(t):=c_{\widetilde\lambda,\alpha}(t-\al/2-1/2)=
2w(\pi t)\sin(\pi t)\ga(\pi t),\qquad \ga(t):=e^{i\arg(s\sin t+i d\cos t)}.
\ee
{\mg At zeros of $s\sin t + id \cos t$, set $\ga(t) = 1$.}
Now similarly to Lemma \ref{lem:LII-gauge}, let
\[
 g'_{j+1}=
 g'_j/\ga(\pi(\th+\al j)),\qquad j\in\Z;\qquad g'_0=1.
\]
Then $(G'_{\alpha,\theta}u)_j=g'_ju_j$ is the unitary transform of $H$ which
removes the factors $\ga(\pi t)$ from the off-diagonal matrix elements without changing diagonal elements and thus reduces
the family to the form \eqref{sing} as the same phase shift 
$t\to t-\al/2-1/2$ has to be applied to the diagonal elements as well.

For $\lb\in\mathcal I_{\mathrm{iso}}$, the original hopping
$c_{\lambda,\alpha}(t)
=\lambda_2+2\lambda_1\cos(2\pi(t+\alpha/2))$
already has a zero, since $\lambda_2\le 2\lambda_1$.
Choosing $b(t)=c_{\lambda,\alpha}(t-\al/2+1/2+\ga/(2\pi))$, 
$\ga=\arccos\frac{\lb_2}{2\lb_1}$, we complete the proof.
\end{proof}

\subsection{Separation into blocks}\label{sec:separation}

Consider the operator $H_{v,b,\al,\th}$ given by \eqref{sing}. Note that it is singular with $b(0)=0$. 
We now construct covers of its spectrum applying
the strategy of our article \cite{bjk}. First, we separate it into blocks and use a submersion lemma.

Let $p_n/q_n$ be a sequence of canonical approximants to $\al$ and let
\be\label{Jn}
\de_n=q_n\left(\alpha-\frac{p_n}{q_n}\right),\quad J_n=[-|\de_{n-1}|,|\de_{n-1}|]\qquad n=2,3,\dots.
\ee
By standard properties of approximants, the length
\[
|J_n|<2/q_n.
\]

Fix any $n\ge 3$ and any phase $\th=\th_0$ of $H_{v,b,\al,\th}$. 
 Choose
$n_0\in\mathbb Z$ such that $x_0 :=\th_0+n_0\al$ satisfies
$x_0\pmod{1}\in(-|\de_n|,|\de_n|)$.
{ There is a strictly increasing sequence of integers $m_k$, $k=\dots,-1,0,1,\dots$,
such that $m_0=0$, $N_k:=m_{k+1}-m_k$ takes on only two values, $N_k=q_n$ or $N_k=q_{n-1}$,  and such that
$x_0+\al m_k\pmod{1}\in J_n$, $k\in\Z$, and $x_0+\al j\pmod{1}\notin (-|\de_n|,|\de_n|)$ if $j\neq m_k$ for any $k$ (this is discussed in more detail in  Section 7.1. of \cite{bjk})}.

As in \cite{bjk}, 
we separate $H_{v,b,\al,\th}$ into blocks removing the matrix elements $b(x_k)$, where $x_k:=\th_0+(n_0+m_k)\alpha$. Then $x_k\pmod{1}\in J_n$ and $|b(x_k)|\le C(b)|J_n|$.
The length of the $k$'th block between the removed bonds $b(x_k)$ and $b(x_{k+1})$ is $N_k=m_{k+1}-m_k\in\{q_n,q_{n-1}\}$.

 Denote an orthonormal basis of $\C^{N}$ by $e_0,e_1,\dots,e_{N-1}$,
 and write a vector in this basis as $u=(u_0,u_1,\dots, u_{N-1})^T$.
 Let $\wh \Gamma_N: \C^{N}\to \C^{2}$ and the projection $P_N:  \C^{N}\to \C^{N}$ onto  the $\operatorname{span}\{e_0,e_{N-1}\}$,
 be given by
\be\label{PN}
\wh\Gamma_Nu=(u_0,u_{N-1})^T,\qquad
P_N=\wh\Gamma_N^*\wh\Gamma_N.
\ee

 Define a tridiagonal matrix $\wh B_N(x):\C^{N}\to\C^{N}$ using the matrix elements of  $H_{v,b,\al,\th}$ as
 \begin{equation}\label{BhN}
\begin{aligned}
\wh B_N(x)_{j,j}&=v(x+(j+1)\alpha), \qquad 0\le j\le N-1,\cr
 \wh B_N(x)_{j+1,j}&=\wh B_N(x)_{j,j+1}=b(x+(j+1)\alpha),
 \qquad 0\le j\le N-2.
 \end{aligned}
\end{equation}


The following proposition from \cite{bjk} applies here.

\begin{proposition}\cite[Proposition~7.3]{bjk}
\label{prop:blocks}
Let $H_{v,b,\alpha,\th}$ be given by \eqref{sing} with { $b,v\in C^1(\mathbb R;\mathbb R)$}, $v(x)$ periodic of period $1$,
$b(x)$ either periodic of period $1$ with $b(0)=0$, or of period 2 and $b(0)=b(1)=0$. Let $\al$ be irrational, $\th\in\R$, $\wh\Gamma_N$ given in \eqref{PN},
 $\wh B_N(x)$ given in \eqref{BhN}, and $J_n$ in \eqref{Jn}. Then for $N'_1=q_n$ and $N'_2=q_{n-1}$, $n\ge 3$, $\tau=\sup_k|b(x_k)|$,
\begin{align}
 \spec(H_{v,b,\alpha,\th})\subset
 \overline{\bigcup_{\nu=1}^{2}
 { \ \bigcup_{x\in J_n}}
 \ \bigcup_{\substack{V=V^*\in \mathcal M_2(\C)\\\|V\|\le\tau}}
 {\si\left(\wh B_{N'_\nu}(x)+\wh\Gamma_{N'_\nu}^*V\wh\Gamma_{N'_\nu}\right)}}.
\end{align}
\end{proposition}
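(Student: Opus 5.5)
We will prove the inclusion by a standard resolvent/block-decoupling argument, fiber by fiber in $\th$. Fix $\th$ and $E\notin$ the closure of the right-hand side; we show $E\notin\spec(H_{v,b,\alpha,\th})$. Using the return times $m_k$ from the construction above, write
\[
H_{v,b,\alpha,\th}=\bigoplus_{k\in\Z}\wh B_{N_k}(y_k)+R,\qquad
R=\sum_k b(x_k)\bigl(e_{m_k'}e_{m_k'+1}^*+e_{m_k'+1}e_{m_k'}^*\bigr),
\]
where $y_k$ is the appropriate shift of $x_k$ (so $y_k\bmod 1\in J_n$), $m_k'$ indexes the cut bond at the end of block $k$, and $N_k\in\{q_n,q_{n-1}\}$. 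Each removed bond couples only the last site of block $k$ with the first site of block $k+1$, and $|b(x_k)|\le\tau$. The indexing conventions in \eqref{BhN} (diagonal entry $v(x+(j+1)\alpha)$) are chosen precisely so that the $k$-th block equals $\wh B_{N_k}(y_k)$ for a phase $y_k$ in $J_n$ (mod $1$; periodicity of $v,b$, or period $2$ with zeros at $0,1$, handles the reduction mod $1$).

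The key step is to realize $H-E$ as a $2\times2$-boundary perturbation of each block simultaneously. I would use a Weyl-sequence argument rather than resolvents. Suppose $E\in\spec(H)$. Then there are unit vectors $u$ with $\|(H-E)u\|<\epsilon$. Split $u=\sum_k u^{(k)}$ by blocks. Then
\[
\sum_k\bigl\|(\wh B_{N_k}(y_k)-E)u^{(k)}+(Ru)^{(k)}\bigr\|^2<\epsilon^2 .
\]
The term $(Ru)^{(k)}$ is supported on the two boundary sites of block $k$ and has size $\le\tau(|u^{(k-1)}_{\rm last}|+|u^{(k+1)}_{\rm first}|)$. This is not yet of the form $\wh\Gamma^*V\wh\Gamma u^{(k)}$, because it involves neighbouring blocks. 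To fix this, I would symmetrize: for a bond with coefficient $\beta=b(x_k)$ connecting site $p$ (end of block $k$) and $p+1$ (start of block $k+1$), if $u_p,u_{p+1}$ are both nonzero, replace the coupling by the diagonal terms $\beta u_{p+1}/u_p$ at $p$ and $\beta u_p/u_{p+1}$ at $p+1$. These are not bounded by $\tau$ in general, so the bound needs more care.

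The cleaner route, and the one I would actually use, is the positivity trick from \cite{bjk}. Note that $\beta(e_pe_{p+1}^*+e_{p+1}e_p^*)$ is bracketed between $\pm|\beta|(e_pe_p^*+e_{p+1}e_{p+1}^*)$. Hence $H$ is sandwiched in the operator sense:
\[
\bigoplus_k\bigl(\wh B_{N_k}(y_k)-\tau P_{N_k}\bigr)\le H\le\bigoplus_k\bigl(\wh B_{N_k}(y_k)+\tau P_{N_k}\bigr).
\]
Sandwiching alone does not control spectra inside gaps. I therefore plan a continuity/homotopy argument: consider the family $H_s=\bigoplus\wh B+sR$, $s\in[0,1]$. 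This gives a real-analytic family whose spectral projection onto a gap $(E_1,E_2)$ of the union set $\mathcal U$ (the right-hand side) cannot change. The reason is that any eigen-/approximate-eigenvector at energy in the gap could be localized, via a Weyl-sequence argument, to finitely many blocks, reducing to finite-dimensional matrices. For those finite matrices, the interlacing theorem for rank-two boundary perturbations, together with the fact that one may choose self-adjoint $2\times 2$ $V$ per block realizing the coupling on the Schur complement, shows that the eigenvalues remain in $\mathcal U$.

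I expect the main obstacle to be exactly this localization plus Schur-complement step. It requires showing that, for a finite chain of blocks coupled by bonds of size $\le\tau$, every eigenvalue $E$ is an eigenvalue of some single block $\wh B_{N}(y)+\wh\Gamma_N^*V\wh\Gamma_N$ with $\|V\|\le\tau$. I would try the following. Take the eigenvector, pick a block $k$ maximizing a suitable boundary ratio, and define $V$ as the rank-one-per-end potential $\mathrm{diag}(\beta_{k-1}u_{\rm prev}/u_{\rm first},\ \beta_ku_{\rm next}/u_{\rm last})$. An extremal choice of $k$ makes both ratios at most $1$ in modulus. This is the step where I would defer to \cite[Proposition~7.3]{bjk}, whose proof carries over verbatim since only the block structure and $|b(x_k)|\le\tau$ are used.
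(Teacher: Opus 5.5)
\textbf{There is a genuine gap.} The paper does not prove this statement; it quotes it from \cite{bjk}, so your argument has to stand on its own. Your skeleton is viable: a Weyl vector, localization to finitely many blocks, and a boundary potential on a single block. But the one nontrivial step is only asserted (``an extremal choice of $k$ makes both ratios at most $1$''). You then defer it to \cite[Proposition~7.3]{bjk}, which is the statement being proved.

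You never say what is extremized, and the obvious choices fail. The block containing the largest boundary value of $|\phi|$ may have $\phi_{\mathrm{last}}=0$ while $|\phi_{\mathrm{prev}}|=|\phi_{\mathrm{next}}|=0.9|\phi_{\mathrm{first}}|$. Then the diagonal $V$ is undefined. A full Hermitian $V$ also fails there: it needs $|b(x_k)\phi_{\mathrm{prev}}|^2+|b(x_{k+1})\phi_{\mathrm{next}}|^2\le\tau^2(|\phi_{\mathrm{first}}|^2+|\phi_{\mathrm{last}}|^2)$, which is violated when both cut couplings equal $\tau$.

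The surrounding machinery does not replace this step. Sandwiching and interlacing do not place eigenvalues inside gaps of the union $\mathcal U$ on the right-hand side. The Schur complement of the rest of the chain involves its resolvent and has no a priori bound by $\tau$. The homotopy $H_s$ with an ``unchanging'' spectral projection presupposes exactly the finite-chain statement you have not proved, and it is superfluous once you have it.

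What closes the gap is a discrete intermediate-value argument that uses the free ends of a \emph{finite} chain.
\begin{itemize}
\item Given $E\in\spec(H_{v,b,\alpha,\th})$ and $\epsilon>0$, take a finitely supported $u$ with $\|(H-E)u\|\le\epsilon\|u\|$. Let $P$ project onto a run of consecutive whole blocks $1,\dots,L$ containing its support, and set $H_L=PHP$, so both outer cut bonds are deleted.
\item Then $\|(H_L-E)u\|=\|P(H-E)u\|\le\epsilon\|u\|$. Hence $H_L$ has an eigenvalue $E'$ with $|E-E'|\le\epsilon$, with a real eigenvector $\phi$.
\item Write $f_k,\ell_k$ for $|\phi|$ at the first and last site of block $k$, and set $\ell_0=f_{L+1}=0$. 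Let $k_0$ be the first block on which $\phi\not\equiv0$; its left coupling term vanishes.
\item Let $k\ge k_0$ be the smallest index with $f_{k+1}\le\ell_k$; it exists because $f_{L+1}=0$. If $k>k_0$, minimality gives $f_k>\ell_{k-1}\ge0$. In all cases $\phi^{(k)}\neq0$.
\item Define $V=\diag\bigl(b(x_k)\phi_{\mathrm{prev}}/\phi_{\mathrm{first}},\,b(x_{k+1})\phi_{\mathrm{next}}/\phi_{\mathrm{last}}\bigr)$, setting an entry to $0$ when its numerator vanishes. It is real, satisfies $\|V\|\le\tau$, and $(\wh B_{N_k}(x_k)+\wh\Gamma_{N_k}^*V\wh\Gamma_{N_k})\phi^{(k)}=E'\phi^{(k)}$.
\end{itemize}
Thus $\dist(E,\mathcal U)\le\epsilon$ for every $\epsilon>0$, so $E\in\overline{\mathcal U}$.

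There is one more point you gloss over. The block phases $x_k$ lie in $J_n$ only modulo $1$, and for period-$2$ $b$ periodicity alone does not bring them back to $J_n$. You need a symmetry such as $b(x+1)=-b(x)$ together with conjugation by $\diag(1,-1,1,\dots)$, which preserves the form and norm of the boundary perturbation. This is what the paper does in the proof of Proposition~\ref{prop:L123-cover}.
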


Note that $\wh\Gamma_{N'_\nu}^*V\wh\Gamma_{N'_\nu}$ has only 4 nonzero matrix elements located at upper right, left,
and lower right, left corners of the matrix. Thus the spectrum of $H_{v,b,\alpha,\th}$ is included into the union of spectra of perturbed blocks. The perturbation is of rank at most 2 and of small (for large $n$) norm 
\[
\| V\|\le\tau=\sup_k|b(x_k)|\le C(b)|J_n|<2C(b)/q_n.
\]
We will see that in some cases, the bound on $\sup_k|b(x_k)|$ is smaller.

\subsection{General cover estimates}\label{sec:variation}
{\mg A proof of the following proposition is contained in Section 7.2. of \cite{bjk} in the case of constant $K$. We reproduce it here briefly for reader's convenience  and extend to $K$ depending on the phase $x$.} In the following sections we find the matrices $K$, $\mathcal E$ for $\lb\in\mathcal C_0$ 
with $\mathcal E$ of rank no more than 4 satisfying $(\Id- P_N)\mathcal E (\Id-P_N)=0$,
and derive explicit expressions for the estimates in Proposition \ref{prop:block-cover} below for $\lb\in\mathcal C_0$.

We write $\|A\|_{S_1}=\Tr\sqrt{A^*A}$, and $\|A\|_{S_2}^2=\Tr(A^*A)$ for trace and Hilbert--Schmidt norms of a matrix $A$, respectively.

\begin{proposition}\label{prop:block-cover}
Let $N\ge 2$, $J$ be a compact interval, let $B:J\to\mathcal M_N(\C)$ be a $C^1$ Hermitian path, and let $K:J\to\mathcal M_N(\C)$ be continuous with $K^*=-K$. Let
\[
\mathcal E(x)=
 \frac{d}{dx}B(x)-[K(x),B(x)].
\]
Suppose that $(\Id- P_N)\mathcal E (\Id-P_N)=0$, where $P_N$ is defined in \eqref{PN}.
For $\bt\ge0$, let $H^\pm=B\pm \bt P_N$ and define the closed intervals
\[
 I_j=\bigl[\min_{x\in J}\lambda_j(H^-(x)),\ \max_{x\in J}\lambda_j(H^+(x))\bigr],
 \qquad 1\le j\le N,
\]
where $\lb_1(H^{\pm})\le\cdots\le \lb_N(H^{\pm})$ are the ordered eigenvalues
of $H^\pm$. Then
\be\label{incl}
 \bigcup_{x\in J}\ \bigcup_{\substack{V=V^*\in\mathcal M_2(\C)\\\|V\|\le \bt}}
 \spec\bigl(B(x)+\widehat\Gamma_N^*V\widehat\Gamma_N\bigr)
 \subset\bigcup_{j=1}^N I_j,
\ee
and the total length of intervals satisfies
\begin{equation}\label{eq:boundary-length}
 \sum_{j=1}^N|I_j|\le4\bt+4\int_J\bigl(\|\mathcal E(x)\|_{S_2}+\bt\|[K(x),P_N]\|_{S_2}\bigr)\dd x.
\end{equation}
\end{proposition}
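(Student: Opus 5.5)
The plan has two parts: first the inclusion \eqref{incl}, which is Weyl monotonicity, and then the length bound \eqref{eq:boundary-length}, obtained by tracking eigenvalue motion in a rotating frame generated by $K$.

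\textbf{Inclusion.} For Hermitian $V$ with $\|V\|\le\bt$ we have $-\bt\Id_2\le V\le\bt\Id_2$. Hence
\[
B(x)-\bt P_N\le B(x)+\widehat\Gamma_N^*V\widehat\Gamma_N\le B(x)+\bt P_N
\]
in the sense of quadratic forms. Weyl's monotonicity theorem then gives
\[
\lambda_j(H^-(x))\le\lambda_j\bigl(B(x)+\widehat\Gamma_N^*V\widehat\Gamma_N\bigr)\le \lambda_j(H^+(x)).
\]
Every such eigenvalue therefore lies in $I_j$, which proves \eqref{incl}.

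\textbf{Splitting the length.} For each $j$ choose $x_\pm\in J$ attaining the min and the max defining $I_j$. Then
\[
|I_j|\le \lambda_j(H^+(x_+))-\lambda_j(H^-(x_-))=
\bigl[\lambda_j(H^+(x_+))-\lambda_j(H^+(x_-))\bigr]+\bigl[\lambda_j(H^+(x_-))-\lambda_j(H^-(x_-))\bigr].
\]
The second bracket is a rank-two perturbation of norm $2\bt$. Its sum over $j$ is at most $\|2\bt P_N\|_{S_1}=4\bt$ by the Lidskii inequality $\sum_j|\lambda_j(A)-\lambda_j(A')|\le\|A-A'\|_{S_1}$. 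For the first bracket it suffices to bound
\[
\sum_j\bigl|\lambda_j(H^+(x_+))-\lambda_j(H^+(x_-))\bigr|
\]
by the total variation $\int_J\sum_j|\tfrac{d}{dx}\lambda_j(H^+)|\,dx$. Here the eigenvalues are Lipschitz in $x$, so this is legitimate after a standard approximation by analytic or generic paths, or directly via Lidskii on a fine partition of $J$. Overall, $4\bt+\int_J\sum_j|\dot\lambda_j(H^+)|\,dx$ bounds the total length.

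\textbf{Rotating frame (the key step).} Let $W(x)$ be the unitary solution of $W'=KW$, $W(x_0)=\Id$; it exists since $K$ is continuous and skew-Hermitian. Conjugation by $W$ preserves eigenvalues, and
\[
\frac{d}{dx}\bigl(W^*H^+W\bigr)=W^*\bigl(\dot H^+-[K,H^+]\bigr)W=W^*\bigl(\mathcal E-\bt[K,P_N]\bigr)W .
\]
Applying Lidskii on a partition of $J$ and letting the mesh tend to zero gives
\[
\sum_j\int_J|\dot\lambda_j|\,dx\le\int_J\bigl(\|\mathcal E\|_{S_1}+\bt\|[K,P_N]\|_{S_1}\bigr)\,dx .
\]

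\textbf{Main obstacle: passing from $S_1$ to $S_2$.} The hypothesis $(\Id-P_N)\mathcal E(\Id-P_N)=0$ is what allows this, and it is the step I expect to be hardest. It gives $\mathcal E=P_N\mathcal E+(\Id-P_N)\mathcal E P_N$, and each term has rank at most $2$. For rank at most $r$ one has $\|\cdot\|_{S_1}\le\sqrt r\,\|\cdot\|_{S_2}$, and $\|P_N\mathcal E\|_{S_2},\|(\Id-P_N)\mathcal EP_N\|_{S_2}\le\|\mathcal E\|_{S_2}$. Hence
\[
\|\mathcal E\|_{S_1}\le 2\sqrt2\,\|\mathcal E\|_{S_2}\le 4\|\mathcal E\|_{S_2}.
\]
Similarly, $[K,P_N]=KP_N-P_NK$ has rank at most $4$, so $\|[K,P_N]\|_{S_1}\le2\|[K,P_N]\|_{S_2}$.

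Collecting these estimates gives \eqref{eq:boundary-length} with constant $4$. The only other subtlety is justifying the total-variation bound for eigenvalues of a merely $C^1$ path, which the partition argument above handles without differentiating $\lambda_j$.
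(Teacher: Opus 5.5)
Your inclusion argument is correct and matches the paper's, and so does the rotating-frame/Lidskii machinery. The step \emph{Splitting the length}, however, fails.

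The points $x_\pm$ are chosen separately for each $j$. So the second brackets $\lambda_j(H^+(x_-^{(j)}))-\lambda_j(H^-(x_-^{(j)}))$ sit at $j$-dependent points. Their sum is not a trace at any single $x$, and Lidskii does not apply; a priori each term may equal $2\beta$.

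In fact your intermediate bound $\sum_j|I_j|\le4\beta+\sum_j\Var_J\lambda_j(H^+)$ is false, even under the hypothesis on $\mathcal E$. Take the following data:
\begin{itemize}
\item $N=3$, $D=\diag(0,10,20)$, and $0<\beta<5$;
\item $K=A$, a constant real skew-symmetric generator of rotations in the $(e_1,e_2)$-plane;
\item $J=[0,x_1]$ with $e^{-x_1A}e_2=\pm e_1$;
\item $B(x)=e^{xA}De^{-xA}-\beta P_N$.
\end{itemize}
Then $\mathcal E=\beta[K,P_N]$, which satisfies $(\Id-P_N)\mathcal E(\Id-P_N)=0$. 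Moreover $H^+(x)=e^{xA}De^{-xA}$ has constant eigenvalues $0,10,20$, so $\sum_j\Var_J\lambda_j(H^+)=0$.

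But $H^-(x)$ is unitarily equivalent to $D-2\beta e^{-xA}P_Ne^{xA}$. This equals $\diag(-2\beta,10,20-2\beta)$ at $x=0$ and $\diag(-2\beta,10-2\beta,20)$ at $x=x_1$. Hence $|I_j|=2\beta$ for each $j$, and $\sum_j|I_j|=6\beta$. The eigenvalues of $H^-$ can move while those of $H^+$ stay put, so their variation must be controlled separately.

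The repair is what the paper does. Fix a single $x_0\in J$ for all $j$ and use
$|I_j|\le\bigl[\lambda_j(H^+(x_0))-\lambda_j(H^-(x_0))\bigr]+\Var_J\lambda_j(H^+)+\Var_J\lambda_j(H^-)$.
Now the brackets sum exactly to $\Tr(2\beta P_N)=4\beta$.

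Your frame argument must then be run for both $H^+$ and $H^-$, with $\mathcal E^\pm=\mathcal E\mp\beta[K,P_N]$. That argument uses $W'=KW$, so your $W$ is the adjoint of the paper's $U$. Followed by Lidskii on partitions, it is a valid self-contained proof of the extension of \cite[Lemma~7.5]{bjk} that the paper invokes.

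Because the integral term is now doubled, your bound $\|\mathcal E\|_{S_1}\le2\sqrt2\,\|\mathcal E\|_{S_2}$ would only give the constant $4\sqrt2$. To obtain $4$, note that $(\Id-P_N)\mathcal E^\pm(\Id-P_N)=0$ gives $\rank\mathcal E^\pm\le4$. Hence $\|\mathcal E^\pm\|_{S_1}\le2\|\mathcal E^\pm\|_{S_2}\le2\bigl(\|\mathcal E\|_{S_2}+\beta\|[K,P_N]\|_{S_2}\bigr)$. Summing over $\pm$ yields \eqref{eq:boundary-length}.
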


\begin{proof}
For $V=V^*$ with $\|V\|\le \bt$,
\[
 -\bt P_N\le\widehat\Gamma_N^*V\widehat\Gamma_N\le \bt P_N,
 \qquad H^-\le B+\widehat\Gamma_N^*V\widehat\Gamma_N\le H^+.
\]
The min--max principle gives, for every $x\in J$,
\[
 \lambda_j(H^-(x))\le\lambda_j(B(x)+\widehat\Gamma_N^*V\widehat\Gamma_N)
 \le\lambda_j(H^+(x)),
\]
which proves the inclusion \eqref{incl}.

Now fix $x_0\in J$. Note that
\[
 \sum_{j=1}^N\bigl(\lambda_j(H^+(x_0))-\lambda_j(H^-(x_0))\bigr)
 =\Tr(2\bt P_N)=4\bt.
\]
Adding the variations for each $j$ gives
\begin{equation}\label{eq:boundary-width}
 \sum_{j=1}^N|I_j|\le4\bt +\sum_{j=1}^N\Var_J\lambda_j(H^+)
 +\sum_{j=1}^N\Var_J\lambda_j(H^-).
\end{equation}
Note that $(\Id- P_N) [K,P_N] (\Id-P_N)=0$, and therefore in particular
$\rank [K,P_N]\le 4$ and so $\|[K,P_N]\|_{S_1}\le 2 \|[K,P_N]\|_{S_2}$.
Let $\mathcal E^\pm=\frac{d}{dx}H^\pm - [K,H^\pm]$.
Then we clearly have using our assumption on $\mathcal E$ that
\[
 \mathcal E^\pm=\mathcal E\mp \bt [K,P_N],\qquad
 (I-P_N)\mathcal E^\pm(I-P_N)=0,
\]
and so $\rank \mathcal E^\pm\le 4$. Thus 
$\|\mathcal E^\pm\|_{S_1}\le 2  \|\mathcal E^\pm\|_{S_2}$.

{\mg
Lemma~7.5 in \cite{bjk} applied to $H^+$ and then to $H^-$ would 
bound the sum of variations by integrals of $\|\mathcal E^\pm\|_{S_1}$. However, it was proved
in \cite{bjk} for a constant $K$. The proof is based on the unitary transform $e^{-(x-a)K}$.
To extend the proof to the present case of $K(x)$ depending on $x$, consider instead the unitary transform
$U(x)$ which is the solution to the equation $\frac{d}{dx}U(x)=-U(x)K(x)$ with initial condition $U(a)=I$.
(The fact that it is unitary is easy to see differentiating $U(x)U(x)^*$.) Otherwise, the proof remains the same. Then, the lemma yields}
\[
\begin{aligned}
 \sum_{j=1}^N\bigl(\Var_J\lambda_j(H^+)+\Var_J\lambda_j(H^-)\bigr)
 &\le\int_J\bigl(\|\mathcal E^+\|_{S_1}+\|\mathcal E^-\|_{S_1}\bigr)\dd x\le4\int_J\bigl(\|\mathcal E\|_{S_2}+\bt\|[K,P_N]\|_{S_2}\bigr)\dd x.
\end{aligned}
\]
Together with \eqref{eq:boundary-width}, this proves \eqref{eq:boundary-length}.
\end{proof}

\subsection{Spectral covers on the three critical boundaries}\label{sec:cover3b}

{\mg We now provide explicit matrices $K$ and $\mathcal E$ for 
$\lb\in {\mathcal L}_{I},{\mathcal L}_{II},{\mathcal L}_{III}$,
such that  $\mathcal E$ and $\beta [K,P_N]$ have bounded Hilbert-Schmidt norm and $(\Id- P_N)\mathcal E (\Id-P_N)=0$. 
Applying then Proposition \ref{prop:block-cover} with $J=J_n$ and 
$\bt=C(2s/q_n^2+|d|/q_n)$, we obtain smallness of the total length of the cover.
In the next subsection we do the same for $\lb\in {\mathcal I}_{iso}$.}

Recall  \eqref{vb}, \eqref{param}. Assume first that $d\neq 0$. We then have $0<|d|\le w(t)\le s$ for all $t\in\R$.

{\mg
Let $\rho=\sqrt{s^2-\nu^2}$ and
\[
 C(t)=\begin{pmatrix}
       \nu\sin t -i d\cos t&\rho\sin t\\
       -\rho\sin t&\nu\sin t +i d\cos t
      \end{pmatrix},\qquad M(t)=\frac{C(\pi t)}{w(\pi t)}.
\]
For $r\ge1$, we write
\[
 \mathcal P_r(t)=M(t)M(t+\al)\cdots M(t+(r-1)\al),\qquad
 \chi_r(t)=\tfrac12\Tr\mathcal P_r(t).
\]
Denote
\be\label{tj}
 t_j=x+(j+1)\alpha,\qquad j\in\Z.
\ee 
We define the coefficients $k_r$ and the entries of $K_N(x)$ by the formulae
\begin{equation}\label{K}
 \begin{split}
 k_r(t)&=\frac{\pi(-1)^{r-1}\chi_r(t)}{\sin(\pi\al r)},\qquad r\ge1,\\
 (K_N(x))_{j,j+r}&=k_r(t_j),\qquad
 (K_N(x))_{j+r,j}=-k_r(t_j),\qquad 0\le j<j+r\le N-1,\\
 (K_N(x))_{jj}&=0,\qquad 0\le j\le N-1.
 \end{split}
\end{equation}}
Note that $\sin(\pi\al r)\neq 0$ since $\al$ is irrational.

\begin{lemma}\label{lemmaEK}
Let $N>2$, $B_N(x)=\wh B_N(x)$ be given by \eqref{BhN} with $v(x)$, $b(x)$ as defined in \eqref{vb}, and 
let $K_N(x)$ be as defined in \eqref{K}.
Then 
\[
 \mathcal E_N(x):=\frac{d}{dx}B_N(x)-[K_N(x),B_N(x)]
 \]
has the following edge matrix elements for $1\le j\le N-2$,
\be\label{Eedge}\begin{aligned}
(\mathcal E_N)_{0,j}&=(\mathcal E_N)_{j,0}=\frac{(-1)^{j+1}2\pi}{\sin(\pi(j+1)\al)}w(\pi t_{-1})\sin (\pi t_{-1})\chi_{j+1}(t_{-1}),\\
(\mathcal E_N)_{j,N-1}&=(\mathcal E_N)_{N-1,j}=\frac{(-1)^{N-1-j}2\pi}{\sin(\pi(N-j)\al)}w(\pi t_{N-1})\sin (\pi t_{N-1})\chi_{N-j}(t_j),\\
\end{aligned}
\ee
and the corner elements
\be\label{Ecorner}\begin{aligned}
(\mathcal E_N)_{0,0} &=- \frac{4\pi}{\sin(\pi\al)}w(\pi t_{-1})\sin (\pi t_{-1})\chi_{1}(t_{-1}),\\
(\mathcal E_N)_{N-1,N-1}&=\frac{4\pi}{\sin(\pi\al)}w(\pi t_{N-1})\sin (\pi t_{N-1})\chi_{1}(t_{N-1}),\\
(\mathcal E_N)_{0,N-1}&= (\mathcal E_N)_{N-1,0}=\pi (-1)^{N-1} \Tr\mathcal P_{N-1}(t_0)
\begin{pmatrix}
\nu\sin\ga-id\cos\ga & \rho \sin\ga\\
-  \rho \sin\ga & \nu\sin\ga +id\cos\ga
\end{pmatrix},\\ \ga &=\pi (t_{-1}+t_{N-1}),
\end{aligned}
\ee
and the rest of the matrix elements of $\mathcal E_N(x)$ are zero.
\end{lemma}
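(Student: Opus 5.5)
The plan is a direct entry-by-entry computation of $\mathcal E_N=B_N'-[K_N,B_N]$. The operator is tridiagonal, and $K_N$ is the antisymmetric Toeplitz-like matrix whose $(j,j+r)$ entry is $k_r(t_j)$. The key structural input is a transfer-matrix identity for the scalar functions $\chi_r$. With $a_j:=v(t_j)$ and $b_j:=b(t_j)$ (so $B_{jj}=a_j$ and $B_{j,j+1}=b_j$), the commutator entry is
\[
[K,B]_{j,j+r}=k_{r+1}(t_j)b_{j+r}+k_{r-1}(t_j)b_{j+r-1}-b_{j-1}k_{r+1}(t_{j-1})-b_jk_{r-1}(t_{j+1})+k_r(t_j)(a_{j+r}-a_j),
\]
with the conventions $k_0=0$ and $k_{-1}=-k_1$ (shifted), and with any term whose index leaves $\{0,\dots,N-1\}$ dropped. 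For interior indices, the vanishing of $\mathcal E_{j,j+r}$ for $r\ge2$ and the identities $\mathcal E_{j,j+1}=0$, $\mathcal E_{jj}=0$ reduce to three-term recurrences for $\chi_r$.

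The first step is to establish the recurrence. Since $\mathcal P_{r+1}(t)=\mathcal P_r(t)M(t+r\al)=M(t)\mathcal P_r(t+\al)$, and $\det M=(\nu^2\sin^2+d^2\cos^2+\rho^2\sin^2)/w^2=1$ (using $\rho^2=s^2-\nu^2$), Cayley--Hamilton gives relations of the form
\[
\chi_{r+1}(t)=\chi_r(t)\,\chi_1(t+r\al)\cdot(\text{corrections})-\chi_{r-1}(t),
\]
or more precisely Chebyshev-type identities combining traces of $\mathcal P_r$ at shifted arguments. I would rewrite everything in terms of $w(\pi t)\sin(\pi t)=b(t)/2$ and $\chi_1(t)=\nu\sin(\pi t)/w(\pi t)$, so that $b_j\chi_1(t_j)\propto\nu\sin^2$, which connects to $v(t)=-2\nu\cos(2\pi t-\pi\al)$. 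The factor $1/\sin(\pi\al r)$ in $k_r$ should absorb the derivative via addition formulas, $\sin(\pi t)\sin(\pi(t+r\al))$-type telescoping, exactly as in the constant-$K$ case of \cite{bjk}. Here, however, $K$ depends on $x$, and $d/dx$ acts only on $B$, so the identities must be pointwise in $t$.

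The second step is to verify the interior cancellations in three cases. For $r=0$, check $a_j'$ against $2(k_1(t_{j-1})b_{j-1}-k_1(t_j)b_j)$. For $r=1$, check $b_j'$ against $k_2b_{j+1}-b_{j-1}k_2(t_{j-1})+k_1(a_{j+1}-a_j)$. For $r\ge2$, only the four $b$-terms and the $a$-difference must cancel. Each case becomes a trigonometric identity in the trace functions, which I would verify using the multiplicative structure of $\mathcal P_r$ together with the $2\times2$ identity $\Tr(AB)+\Tr(AB^{-1})=\Tr A\Tr B$ for $\det B=1$. I expect this to be the main obstacle, in particular finding the right identity so that the $x$-dependence of $k_r(t_j)$ is consistent across neighbouring rows. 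My first attempt would be to guess that $\chi_r$ is exactly the function that makes $b\,k_r$ a discrete derivative of the potential-like term, and to verify this by induction on $r$.

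The final step is to compute the boundary rows. In rows $0$ and $N-1$, the terms involving $b_{-1}=b(t_{-1})$ or $b_{N-1}=b(t_{N-1})$ are absent from the truncated commutator. Consequently, $\mathcal E$ in those rows equals exactly the negative of the missing terms, $\pm 2 k_{j+1}(t_{-1})b(t_{-1})$, and so on. Inserting $b=2w\sin$ then gives \eqref{Eedge} and the diagonal corners in \eqref{Ecorner}. For the $(0,N-1)$ corner, two missing terms combine, involving $\chi_N(t_{-1})$ and $\chi_N(t_0)$-type products. Using $\mathcal P_N(t_{-1})=M(t_{-1})\mathcal P_{N-1}(t_0)$ and $\mathcal P_{N+1}$ type expressions with $M(t_{N-1})$, I would then combine them by a sine addition formula into $\Tr\mathcal P_{N-1}(t_0)C(\ga)$ with $\ga=\pi(t_{-1}+t_{N-1})$. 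That all other entries vanish follows from the interior identities.
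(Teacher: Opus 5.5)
There is a genuine gap. Your framework matches the paper's. The tridiagonal expansion of $[K,B]_{j,j+r}$ is the paper's \eqref{commut}. The edge and corner entries of $\mathcal E_N$ are exactly the terms containing the removed bonds $b(t_{-1})=b(x)$ and $b(t_{N-1})=b(x+N\alpha)$.

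The core of the lemma is left unproved, and you flag it yourself as the main obstacle. That core is: at interior indices, $[K,B]_{j,j+\ell}=0$ for $\ell\ge2$, and $[K,B]_{j,j+\ell}=\frac{d}{dx}B_{j,j+\ell}$ for $\ell=0,1$. The route you sketch would not supply it, because there is no closed scalar three-term recurrence among the $\chi_r$. The factors $M(t+k\alpha)$ do not commute, so applying $\Tr(AB)+\Tr(AB^{-1})=\Tr A\,\Tr B$ to $\mathcal P_r(t)$ produces traces such as $\Tr\bigl(\mathcal P_r(t)M(t+(r-1)\alpha)\bigr)$, which are not among the $\chi$'s. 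Moreover, each interior entry involves $\chi$'s at three base points $t_{j-1},t_j,t_{j+1}$. So an induction on $r$ at the level of scalars has nothing to close on.

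The missing idea is to work linearly in a single matrix product. Fix $A=\mathcal P_\ell(t_j)$. By cyclicity of the trace, each of the five coefficients in the expansion is $\Tr(AY)$ for one extra factor $Y$:
\begin{itemize}
\item $k_\ell(t_j)\propto\Tr A$;
\item $k_{\ell+1}(t_j)\propto\Tr(AM(t_{j+\ell}))$ and $k_{\ell+1}(t_{j-1})\propto\Tr(AM(t_{j-1}))$;
\item $k_{\ell-1}(t_j)\propto\Tr(AM(t_{j+\ell-1})^{-1})$ and $k_{\ell-1}(t_{j+1})\propto\Tr(AM(t_j)^{-1})$.
\end{itemize}
Hence $[K,B]_{j,j+\ell}=\Tr(AX_j)$ for an explicit $2\times2$ matrix $X_j$ that depends only on $t_{j-1},t_j,t_{j+\ell-1},t_{j+\ell}$. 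It suffices to show $X_j=0$ as a matrix.

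This holds because the square root $w$ cancels. From $C^*C=w^2I$ and $b(t)=2w(\pi t)\sin(\pi t)$ you get $b(t)M(t)=2\sin(\pi t)C(\pi t)$ and $b(t)M(t)^{-1}=2\sin(\pi t)C(\pi t)^*$, which are trigonometric polynomials. With $\gamma_j=\pi(t_{j-1}+t_{j+\ell})$, product-to-sum formulas give
\begin{itemize}
\item $b(t_{j+\ell})M(t_{j+\ell})-b(t_{j-1})M(t_{j-1})=2\sin(\pi(\ell+1)\alpha)\,C(\gamma_j)$;
\item $b(t_{j+\ell-1})M(t_{j+\ell-1})^{-1}-b(t_j)M(t_j)^{-1}=2\sin(\pi(\ell-1)\alpha)\,C(\gamma_j)^*$.
\end{itemize}
Together with $C(\gamma_j)+C(\gamma_j)^*=2\nu\sin\gamma_j\,I$ and $v(t_j)-v(t_{j+\ell})=-4\nu\sin(\pi\ell\alpha)\sin\gamma_j$, this gives $X_j=0$.

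The same identities handle $\ell=1$. There the bracket becomes $-2\sin(\pi\alpha)\sin(2\pi\alpha)\,C(2\pi t_j)^*$, and $\frac{\pi}{w(\pi t_j)}\Tr\bigl(C(\pi t_j)C(2\pi t_j)^*\bigr)=\frac{d}{dx}b(t_j)$. They also handle $\ell=0$.

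Because these identities are pointwise in the phases, they remain valid formally at $j=0$ and at $j+\ell=N-1$. This is what justifies your missing-terms bookkeeping in the boundary rows. Your plan for the $(0,N-1)$ corner, via $\mathcal P_N(t_{-1})=M(t_{-1})\mathcal P_{N-1}(t_0)$ and $\mathcal P_N(t_0)=\mathcal P_{N-1}(t_0)M(t_{N-1})$, is precisely this identity with $j=0$, $\ell=N-1$. You needed to apply the same mechanism in the interior.

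Two small slips:
\begin{itemize}
\item $[K,B]_{jj}=2\bigl(k_1(t_j)b(t_j)-k_1(t_{j-1})b(t_{j-1})\bigr)$; your sign is reversed.
\item The off-diagonal edge entries are $\mathcal E_{0,j}=-k_{j+1}(t_{-1})b(t_{-1})$ and $\mathcal E_{j,N-1}=k_{N-j}(t_j)b(t_{N-1})$, with no factor $2$. The $2$ in \eqref{Eedge} comes from $b=2w\sin$, and the combination $2k_1b$ occurs only in the diagonal corners.
\end{itemize}
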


\begin{proof}
We omit the subscript $N$. The identities $B^T=B$ and $K^T=-K$ give $[K,B]^T=[K,B]$, so it suffices to compute the entries of $\mathcal E$ on and above the diagonal. 

First, let $1\le j\le N-2$, $\ell\ge 2$, $j+\ell\le N-2$. Then using the fact that $B$ is tridiagonal, we have
\be\label{commut}
\begin{aligned}
[K,B]_{j,j+\ell}&= \sum_k K_{j,k}B_{k,j+\ell}- B_{j,k}K_{k,j+\ell}\\
&= K_{j,j+\ell}B_{j+\ell,j+\ell}+K_{j,j+\ell-1}B_{j+\ell-1,j+\ell}+K_{j,j+\ell+1}B_{j+\ell+1,j+\ell}\\
&- K_{j,j+\ell}B_{j,j}-K_{j-1,j+\ell}B_{j,j-1}-K_{j+1,j+\ell}B_{j,j+1}\\
&=k_{\ell}(t_j)v(t_{j+\ell})+k_{\ell-1}(t_j)b(t_{j+\ell-1})+k_{\ell+1}(t_j)b(t_{j+\ell})\\
&- k_{\ell}(t_j)v(t_{j})-k_{\ell+1}(t_{j-1})b(t_{j-1})-k_{\ell-1}(t_{j+1})b(t_{j}).
\end{aligned}
\ee
Now denoting
\[
A= \mathcal P_{\ell}(t_j),\qquad s_j=\sin(\pi\al j),
\]
we remark that
\[
\begin{aligned}
 k_\ell(t_j)&=\frac{(-1)^{\ell-1}\pi}{2 s_{\ell}}\Tr A,\\
 k_{\ell-1}(t_j)&=\frac{(-1)^{\ell}\pi}{2 s_{\ell-1}}\Tr AM(t_{j+\ell-1})^{-1},\qquad
 k_{\ell-1}(t_{j+1})=\frac{(-1)^{\ell}\pi}{2 s_{\ell-1}}\Tr AM(t_{j})^{-1},\\
 k_{\ell+1}(t_j)&=\frac{(-1)^{\ell}\pi}{2 s_{\ell+1}}\Tr AM(t_{j+\ell}),\qquad
 k_{\ell+1}(t_{j-1})=\frac{(-1)^{\ell}\pi}{2 s_{\ell+1}}\Tr AM(t_{j-1}),
 \end{aligned}
 \]
and therefore
\be\label{mid}\begin{aligned}
[K,B]_{j,j+\ell}&=\frac{(-1)^{\ell}\pi}{2s_{\ell}s_{\ell-1}s_{\ell+1}}
\Tr A\left[ (v(t_j)-v(t_{j+\ell}))s_{\ell+1}s_{\ell-1}\Id\right.\\
&\left. +(M(t_{j+\ell})b(t_{j+\ell})-M(t_{j-1})b(t_{j-1}))s_{\ell}s_{\ell-1}\right.\\
&\left.+
(M(t_{j+\ell-1})^{-1}b(t_{j+\ell-1})-M(t_{j})^{-1}b(t_{j}))s_{\ell}s_{\ell+1}\right].
\end{aligned}
\ee
Using trigonometric identities, it is straightforward to verify that with $\ga_j=\pi(t_{j-1}+t_{j+\ell})$
\be
(M(t_{j+\ell})b(t_{j+\ell})-M(t_{j-1})b(t_{j-1}))s_{\ell}s_{\ell-1}=
2s_{\ell}s_{\ell+1}s_{\ell-1}
\begin{pmatrix}
\nu\sin\ga_j- id\cos\ga_j & \rho\sin\ga_j\\
-\rho\sin\ga_j& \nu\sin\ga_j+ id\cos\ga_j
\end{pmatrix},
\ee
and since $\ga_j=\pi(t_{j-1}+t_{j+\ell})=\pi(t_{j}+t_{j+\ell-1})$,
\be
(M(t_{j+\ell-1})^{-1}b(t_{j+\ell-1})-M(t_{j})^{-1}b(t_{j}))s_{\ell}s_{\ell+1}=
2s_{\ell}s_{\ell+1}s_{\ell-1}
\begin{pmatrix}
\nu\sin\ga_j+ id\cos\ga_j & -\rho\sin\ga_j\\
\rho\sin\ga_j& \nu\sin\ga_j- id\cos\ga_j
\end{pmatrix},
\ee
and
\[
v(t_j)-v(t_{j+\ell})=- 4\nu s_{\ell}\sin\ga_j,
\]
we obtain that the expression inside the square bracket in \eqref{mid} vanishes, so that $[K,B]_{j,j+\ell}=0$.

Using \eqref{commut} we similarly have for $1\le j\le N-3$, $\ell=1$,
\be\label{jj+1}\begin{aligned}
[K,B]_{j,j+1}&=-\frac{\pi}{2s_{1}s_{2}}
\Tr M(t_j)\left[ (v(t_j)-v(t_{j+1}))s_{2}\Id\right.\\
&\left. +(M(t_{j+1})b(t_{j+1})-M(t_{j-1})b(t_{j-1}))s_{1}\right].
\end{aligned}
\ee
Substituting the explicit expressions for $M$ and $b$ gives
\be
[K,B]_{j,j+1}=-\frac{\pi}{2w(\pi t_j)}\left(-4s^2\sin(2\pi t_j)\sin(\pi t_j)
-4d^2\cos(2\pi t_j)\cos(\pi t_j)\right)=\frac{d}{dx}b(t_j)=\frac{d}{dx}B(x)_{j,j+1}.
\ee
Now for $1\le j\le N-2$, $\ell=0$,
\be\begin{aligned}
[K,B]_{j,j}&=\frac{\pi}{s_{1}}
\Tr \left[M(t_{j})b(t_{j})-M(t_{j-1})b(t_{j-1})\right]\\
&=
4\pi\nu\sin(\pi(t_j+t_{j-1}))=\frac{d}{dx}v(t_j)=\frac{d}{dx}B(x)_{j,j}.
\end{aligned}
\ee

Consider now $1\le j\le N-2$, $j+\ell=N-1$. Then in \eqref{commut}
there is no element $K_{j,N}B_{N,N-1}$. We can define it by the same formula,
\[
K_{j,N}B_{N,N-1}:=k_{N-j}(t_j)b(t_{N-1}),\qquad N-j=\ell+1.
\]
Adding and subtracting it in \eqref{commut} we obtain
\be
[K,B]_{j,N-1}=\frac{d}{dx}B(x)_{j,N-1}-{\mathcal E}_{j,N-1},
\ee
with the element ${\mathcal E}_{j,N-1}$ as given in \eqref{Eedge}.

The other matrix elements, $[K,B]_{0,\ell}$, $[K,B]_{N-1,N-1}$ 
of the upper triangular part of $[K,B]$ are verified similarly.
\end{proof}

We also have

\begin{lemma}\label{lemma-chi}
For $d\ne0$, the matrices and coefficients above satisfy, for every real $t$ and every $r\ge1$,
\begin{equation}\label{eq:EHM-generator-bound}
 M(t),\mathcal P_r(t)\in\mathrm{SU}(2),\qquad \chi_r(t)\in[-1,1].
\end{equation}
\end{lemma}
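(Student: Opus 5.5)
The plan is to check directly that $C(t)$ has the form of a special unitary matrix with determinant $w(t)^2$, so that $M(t)=C(\pi t)/w(\pi t)\in\mathrm{SU}(2)$. The product statement and the trace bound will then follow from group closure.

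Recall that $\mathrm{SU}(2)$ consists exactly of the matrices
\[
\begin{pmatrix}\alpha&\beta\\-\overline\beta&\overline\alpha\end{pmatrix},\qquad |\alpha|^2+|\beta|^2=1.
\]
For real $t$ the entries of $C(t)$ are
\[
\alpha=\nu\sin t-id\cos t,\qquad \beta=\rho\sin t .
\]
This requires $\rho$ to be real, which holds because $\nu\le s$; this is \eqref{lambdanew}, since there $\nu\le\wt\lb_2=s$. The lower-right entry is $\nu\sin t+id\cos t=\overline\alpha$, and the lower-left entry is $-\rho\sin t=-\overline\beta$. So $C(t)$ has the required shape, and
\[
\det C(t)=|\alpha|^2+|\beta|^2=\nu^2\sin^2t+d^2\cos^2t+(s^2-\nu^2)\sin^2t=s^2\sin^2t+d^2\cos^2t=w(t)^2 .
\]
Since $d\neq0$, we have $w(t)\ge|d|>0$. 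Dividing by $w$ is therefore legitimate, and it gives a matrix of the same shape with $|\alpha|^2+|\beta|^2=1$, that is, $M(t)\in\mathrm{SU}(2)$ for every real $t$.

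Since $\mathrm{SU}(2)$ is a group, $\mathcal P_r(t)$, a product of $r$ such matrices, lies in $\mathrm{SU}(2)$. For any element of this form, $\Tr=\alpha+\overline\alpha=2\operatorname{Re}\alpha$. Hence $\chi_r(t)=\operatorname{Re}\alpha$ is real, and $|\chi_r(t)|\le|\alpha|\le1$. (Equivalently, the eigenvalues of an $\mathrm{SU}(2)$ matrix are $e^{\pm i\phi}$, so $\chi_r=\cos\phi$.)

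There is no real obstacle here. The only point needing care is the reality of $\rho=\sqrt{s^2-\nu^2}$, which I would justify via $0\le\nu\le s$ from \eqref{lambdanew} and \eqref{param}. The hypothesis $d\ne0$ is used only to guarantee $w>0$, so that $M$ is defined everywhere.
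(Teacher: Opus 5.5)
Your proof is correct and follows essentially the same route as the paper. The paper checks $C(t)^*C(t)=w(t)^2\Id$ and $\det C(t)=w(t)^2\ge d^2>0$ using $\nu^2+\rho^2=s^2$, which is the computation you carry out via the standard parametrization of $\mathrm{SU}(2)$, and then concludes by group closure and the eigenvalues $e^{\pm i\varphi_r}$; your explicit justification that $\rho$ is real, from $0\le\nu\le s$ in \eqref{lambdanew}, spells out a point the paper uses without comment.
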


\begin{proof}
Multiplication of matrices, using $\nu^2+\rho^2=s^2$, gives
\[
 C(t)^*C(t)=\bigl((\nu^2+\rho^2)\sin^2t+d^2\cos^2t\bigr)I
           =w(t)^2I,\qquad \det C(t)=w(t)^2\ge d^2\neq 0.
\]
Thus $M(t),\mathcal P_r(t)\in\mathrm{SU}(2)$. The eigenvalues of $\mathcal P_r(t)$ are $e^{i\varphi_r(t)}$ and $e^{-i\varphi_r(t)}$, so
\[
 \chi_r(t)=\cos\varphi_r(t)\in[-1,1].
\]
\end{proof}

\begin{lemma}\label{lemmaEKnorms}{\mg
For $n\ge4$, let $x\pmod1, x+\al N\pmod 1\in J_n$,} and 
$N\in\{q_n,q_{n-1}\}$. Then the matrices $\mathcal E_N(x)$ and $K_N(x)$ from Lemma~\ref{lemmaEK} satisfy
\begin{equation}\label{eq:EHM-bounds}
 \|\mathcal E_N\|_{S_2}\le C\left(\frac{s}{q_n}+|d|\right),\qquad
 \|[K_N,P_N]\|_{S_2}\le Cq_n,
\end{equation}
where $C>0$ is absolute.
\end{lemma}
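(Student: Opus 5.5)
The plan is to read off both bounds directly from the explicit entries in Lemma~\ref{lemmaEK}, using Lemma~\ref{lemma-chi} ($|\chi_r|\le1$, $\mathcal P_r\in\mathrm{SU}(2)$) together with the smallness of $b$ at the cut points $t_{-1}=x$ and $t_{N-1}=x+N\alpha$. Since $\mathcal E_N$ is supported on the first and last rows and columns, $\|\mathcal E_N\|_{S_2}^2$ splits into four contributions: the edge entries, the two diagonal corners, and the $(0,N-1)$ corner.

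\textbf{Edge and diagonal-corner entries.} Each such entry carries a factor $w(\pi t_{-1})\sin(\pi t_{-1})=b(t_{-1})/2$ or the analogous factor $b(t_{N-1})/2$. Both arguments lie in $J_n$ modulo $1$. Since $w\le s$ and $|\sin(\pi t)|\le \pi\,\|t\|$, each factor is bounded by $C s|J_n|\le Cs/q_n$. The remaining factor is $|\chi_{j+1}|\le1$ over $|\sin(\pi(j+1)\alpha)|$. The row sum is therefore at most
\[
C\frac{s^2}{q_n^2}\sum_{r=1}^{N}\frac{1}{\sin^2(\pi r\alpha)}.
\]
The key arithmetic input is that for $N\le q_n$ one has $\sum_{r=1}^{N}\|r\alpha\|^{-2}\le Cq_n^2$. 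This follows from the standard fact that the points $r\alpha$, $1\le r<q_{n+1}$, are separated by at least $\|q_n\alpha\|$, so the $k$-th smallest of $\|r\alpha\|$ is at least $c\,k/q_n$ for $r\le q_n$; here $r=q_n$ itself contributes $\|q_n\alpha\|^{-2}$, which requires care (see the last paragraph). Granting this, the edge part of $\|\mathcal E_N\|_{S_2}$ is $\le C(s/q_n)\cdot q_n=Cs$.

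This gives only $Cs$, not the claimed $Cs/q_n$, so the estimate must be sharpened. One should use $\|t_{-1}\|\le|\delta_{n-1}|$ and, more importantly, pair the factor $\sin(\pi t_{-1})$ with the denominator. Alternatively one can exploit cancellation: $\chi_{r}(t_{-1})$ for $t_{-1}$ near the zero of $b$ has $M(t_{-1})\approx \mathrm{diag}(-i\,\mathrm{sgn}\,d,\,i\,\mathrm{sgn}\,d)$. I expect the intended route to be the following. The sum $\sum_{r\le N}\|r\alpha\|^{-2}$ is dominated by $r$ near $q_{n-1}$ and $q_n$. Those terms are exactly where $t_{-1}+r\alpha$ approaches $t_{N-1}$ or $J_n$, and there $\chi_r$ does not help. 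So I would instead bound the sum honestly as $\le C q_n^2$, giving $\|t_{-1}\|^2 q_n^2$. Then use the sharper localization $\|t_{-1}\|\le|\delta_{n-1}|\le 1/q_n$, which at best gives $O(s)$. The claimed bound $s/q_n$ presumably comes from a more careful sum, $\sum_{r<q_n}\|r\alpha\|^{-2}\le Cq_n^2$ with only $O(1)$ terms of that size, combined with the term $r=N$ being absent: edge indices run only to $j\le N-2$, hence $r\le N-1$. This is where I expect the main difficulty, namely reconciling the normalization to get $s/q_n$. I would check it against the $\mathcal L_{I}$ case scaling $\beta=C(2s/q_n^2+|d|/q_n)$, which suggests the $J_n$-length convention makes $\|b(t_{-1})\|\lesssim s/q_n^2$ in effect.

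\textbf{The $(0,N-1)$ corner and the commutator.} The $(0,N-1)$ corner has modulus at most $\pi\,|\Tr(\mathcal P_{N-1}(t_0)\,D)|\le 2\pi\|D\|$. Here $D$ is the displayed matrix with $\gamma=\pi(t_{-1}+t_{N-1})$, and both arguments are near $0$ mod $1$, so $\sin\gamma=O(1/q_n)$ and $\cos\gamma=O(1)$. This gives $\|D\|\le C(s/q_n+|d|)$, which is the source of the $|d|$ term. For the commutator, $[K_N,P_N]$ is supported on the first and last rows and columns, with entries $\pm k_r(t_j)$. Hence
\[
\|[K_N,P_N]\|_{S_2}^2\le C\sum_{r=1}^{N-1}\frac{1}{\sin^2(\pi r\alpha)}\le Cq_n^2,
\]
by the same arithmetic sum. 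That is the second bound.
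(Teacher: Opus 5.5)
\textbf{There is a genuine gap in the first inequality of \eqref{eq:EHM-bounds}.} Your treatment of the $(0,N-1)$ corner, of $[K_N,P_N]$, and of the arithmetic sum restricted to $r\le N-1$ matches the paper. The first bound, however, is not established. As you note yourself, estimating the edge and diagonal-corner entries with $w\le s$ gives only $|b(t_{-1})|,|b(t_{N-1})|\le Cs/q_n$. Multiplied by $(\sum_{r=1}^{N-1}\sin^{-2}(\pi r\alpha))^{1/2}\le Cq_n$, this yields $\|\mathcal E_N\|_{S_2}\le C(s+|d|)$, not $C(s/q_n+|d|)$. None of the fixes you then float recovers the missing factor $1/q_n$:
\begin{itemize}
\item cancellation in $\chi_r$;
\item absence of the $r=N$ term, which is needed for the sum to be $O(q_n^2)$ but does not improve $s$;
\item a ``convention'' on $|J_n|$.
\end{itemize}

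\textbf{The missing idea.} The factor $w(\pi t)$ is itself small at the cut points:
\[
 w(\pi t)=\sqrt{s^2\sin^2\pi t+d^2\cos^2\pi t}\le s|\sin\pi t|+|d|,
\qquad\text{so}\qquad
 |b(t)|\le 2s\sin^2(\pi t)+2|d|\,|\sin\pi t|.
\]
This is the factorization $c_{\widetilde\lambda,\alpha}(t)=z^{-1}(1+z)(\widetilde\lambda_1+\widetilde\lambda_3z)$. When $d=0$ both factors vanish at the same point, so $b$ has a double zero. For $d\neq0$ the linear term carries only the small coefficient $|d|$. Since $t_{-1}=x$ and $t_{N-1}=x+N\alpha$ lie in $J_n$ modulo $1$ and $|J_n|<2/q_n$, you get $|b(t_{-1})|,|b(t_{N-1})|\le C(s/q_n^2+|d|/q_n)$. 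The edge contribution is then $C(s/q_n+|d|)$, as claimed.

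\textbf{Why this is not cosmetic.} Uniformity of this bound as $d\to0$ is exactly what produces covers of total length $O(q_n^{-2})$ on $\mathcal B_{\rm iso}$, and hence the $1/3$ bounds. Your estimate would only give total length $O(q_n^{-1})$ there.

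\textbf{A minor slip in the arithmetic sum.} The separation to use is that the points $r\alpha$, $0\le r<q_n$, are pairwise at distance at least $\|q_{n-1}\alpha\|>1/(2q_n)$. Separation by $\|q_n\alpha\|$ would only give that the $k$-th smallest $\|r\alpha\|$ is $\gtrsim k/q_{n+1}$, not $k/q_n$.
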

\begin{proof}
Using Lemma \ref{lemma-chi} and the formulae for the matrix elements of $K_N(x)$ and $\mathcal E_N(x)$ 
given by \eqref{K}, \eqref{Eedge}, \eqref{Ecorner}, we obtain with an absolute $C>0$,
\be\label{Enorms}
 \begin{aligned}
 \|\mathcal E_N(x)\|_{S_2}^2
 &\le C\bigl(|w(\pi t_{-1})\sin(\pi t_{-1})|^2+|w(\pi t_{N-1})\sin(\pi t_{N-1})|^2\bigr)
       \sum_{r=1}^{N-1}\frac{1}{|\sin(\pi\al r)|^2}\cr
& + C(s^2 \sin^2\ga + d^2 ),\cr       
\|[K_N,P_N]\|_{S_2}^2
& \le C \sum_{r=1}^{N-1}\frac{1}{|\sin(\pi\al r)|^2}.
 \end{aligned}
 \ee
As argued in (87) of \cite{bjk}, $\sum_{r=1}^{L}1/|\sin(\pi\al r)|^2<Cq^2_n$,
for $L\le\lfloor q_n/2\rfloor$ with absolute $C>0$. However that argument straightforwardly extends for any $1<L<q_n$, and since $N\le q_n$, we conclude that
\be\label{sumsin}
\sum_{r=1}^{N-1}\frac{1}{|\sin(\pi\al r)|^2}< C q_n^2,
\ee
which gives the second inequality of the lemma.

{\mg Since $x\pmod1, x+\al N\pmod 1\in J_n$,} we also have that
\begin{align}\label{w1}
|w(\pi t_{-1})\sin(\pi t_{-1})|
&\le 2(s|\sin(\pi x)|+|d|)|\sin(\pi x)|\le C(s|J_n|^2 +|d||J_n|)<
C\left(\frac{s}{q_n^2}+\frac{|d|}{q_n}\right).\\
\nonumber
|w(\pi t_{N-1})\sin(\pi t_{N-1})|&\le 
2(s|\sin(\pi (x+\al N))|+|d|)|\sin(\pi (x+\al N))|\\
&<C\left(\frac{s}{q_n^2}+\frac{|d|}{q_n}\right).\label{wN}
\end{align}
{\mg Noting also that $|\sin \ga| \le |\sin(\pi t_{-1})| + |\sin(\pi t_{N-1})| \le C/q_n$ and 
substituting \eqref{sumsin}, \eqref{w1}, \eqref{wN} into \eqref{Enorms},
we obtain the first inequality of the lemma.}
\end{proof}

\begin{proposition}\label{prop:L123-cover}
Let $\lb$ be in the regions $\mathcal L_j$, $j=1,2,3$. For every irrational $\alpha$ and every $n\ge4$, there are closed intervals $I_{n,1},\ldots,I_{n,M_n}$ such that
\be\label{incl2}
 \Sigma_{\al,\lb}=\spec(H_{\al,\th,\lambda})\subset\bigcup_{j=1}^{M_n}I_{n,j},
 \qquad M_n\le q_n+q_{n-1},
\ee
and
\begin{equation}\label{L123-coverlength}
 \sum_{j=1}^{M_n}|I_{n,j}|
 \le C\left(\frac{|d|}{q_n}
                    +\frac{s}{q_n^2}\right),
\end{equation}
where $s$ and $d$ are given in \eqref{param} above, and
 $C$ is absolute.
\end{proposition}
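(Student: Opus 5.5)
The plan is to assemble the pieces already in place. By Corollary~\ref{cor:singular-representatives}, $M_{\alpha,\lambda}$ is unitarily equivalent to the direct integral of the singular Jacobi family $H_{v,b,\alpha,\theta}$ of \eqref{sing}, with $v,b$ given by \eqref{vb} and parameters \eqref{param}. Hence $\Sigma_{\al,\lb}=\spec(H_{v,b,\al,\th})$ for any fixed $\theta$. Here $b(x)=2w(\pi x)\sin(\pi x)$ is real and $C^1$, with period $2$ and $b(0)=b(1)=0$ (or period $1$ when $d=0$). So Proposition~\ref{prop:blocks} applies. It places the spectrum inside the closure of the union, over $N\in\{q_n,q_{n-1}\}$, $x\in J_n$ and Hermitian $V$ with $\|V\|\le\tau$, of $\si(\wh B_N(x)+\wh\Gamma_N^*V\wh\Gamma_N)$. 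The key improvement is in bounding $\tau=\sup_k|b(x_k)|$. Since $x_k\pmod1\in J_n$ and $|J_n|<2/q_n$, the estimate \eqref{w1} gives
\[
\tau\le 2(s|\sin\pi x_k|+|d|)|\sin\pi x_k|\le C\Bigl(\frac{s}{q_n^2}+\frac{|d|}{q_n}\Bigr)=:\bt .
\]

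Next, for each of the two lengths $N$, I apply Proposition~\ref{prop:block-cover} with $J=J_n$, $B=\wh B_N$, $K=K_N$ from \eqref{K}, and this $\bt$. Lemma~\ref{lemmaEK} shows that $\mathcal E_N$ is supported on the first and last rows and columns, so $(\Id-P_N)\mathcal E_N(\Id-P_N)=0$. Also $K_N$ is real antisymmetric, since $\chi_r$ is real by Lemma~\ref{lemma-chi}, and it is continuous in $x$. One point needs care: Lemma~\ref{lemmaEKnorms} requires both $x$ and $x+\al N$ to lie in $J_n$ mod $1$. This holds for the actual block endpoints $x_k,x_{k+1}$. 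However, Proposition~\ref{prop:block-cover} integrates over all $x\in J_n$, so I need $x+\al N\pmod1$ to stay within a fixed multiple of $1/q_n$ of $0$. This follows from $\|N\al\|\le 1/q_{n-1}$ for $N\in\{q_n,q_{n-1}\}$. Because of this, $J_n$ may have to be replaced by a slightly enlarged interval, and I will have to check that \eqref{w1}, \eqref{wN} still hold with absolute constants. I expect this bookkeeping, rather than any new idea, to be the main obstacle. In particular, for $N=q_{n-1}$ the shift $\|q_{n-1}\al\|$ is about $1/q_n$, which is fine, but the rigorous version may rely on the precise definition of $J_n$ via $\de_{n-1}$.

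Granting the bounds of Lemma~\ref{lemmaEKnorms}, \eqref{eq:boundary-length} gives, for each $N$,
\[
\sum_{j=1}^N|I_j|\le 4\bt+4|J_n|\Bigl(C\bigl(\tfrac{s}{q_n}+|d|\bigr)+\bt Cq_n\Bigr)\le C\Bigl(\frac{|d|}{q_n}+\frac{s}{q_n^2}\Bigr),
\]
using $|J_n|<2/q_n$ and $\bt q_n|J_n|\le C\bt$. The case $d=0$ is not covered by Lemma~\ref{lemma-chi}, since $M$ may fail to be invertible. There I would argue by continuity: the cover endpoints, i.e.\ the extremal eigenvalues over the compact $J_n$, depend continuously on $d$, and the bound is uniform as $d\to0$. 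Alternatively, the analogous construction from \cite{bjk}, where $b=s\sin^2$-type, applies directly.

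Finally, I take the union of the two families of intervals, for $N=q_n$ and $N=q_{n-1}$. This gives $M_n\le q_n+q_{n-1}$ closed intervals of total length $\le C(|d|/q_n+s/q_n^2)$ covering the union in Proposition~\ref{prop:blocks}. Since a finite union of closed intervals is closed, the cover also contains the closure, and hence $\Sigma_{\al,\lb}$. This proves \eqref{incl2} and \eqref{L123-coverlength}.
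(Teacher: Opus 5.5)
Your proposal is correct, and for $d\neq0$ it is the paper's argument: Proposition~\ref{prop:blocks} with $\tau=\bt\le C(s/q_n^2+|d|/q_n)$, then Proposition~\ref{prop:block-cover} with $J=J_n$ and $K=K_N$, then Lemma~\ref{lemmaEKnorms}.

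The issue you flag about $x+\al N$ is real, and the paper passes over it. However, no enlargement of $J_n$ is needed. For $N\in\{q_n,q_{n-1}\}$ one has $\|N\al\|\le\|q_{n-1}\al\|=|\de_{n-1}|$. Hence $\|x+N\al\|\le2|\de_{n-1}|<2/q_n$ for all $x\in J_n$, so \eqref{wN} and the bound on $\sin\ga$ hold with a doubled constant. Your first bound $\|N\al\|\le1/q_{n-1}$ would not suffice on its own.

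Where you genuinely differ from the paper is the case $d=0$.
\begin{itemize}
\item The paper perturbs the operator to $\lb^{(\ep)}$ and uses $d_{\rm H}(\mathcal S_\ep,\mathcal S_0)\le4\ep$. It then extracts a subsequence along which the endpoints of the $M$ intervals converge, using only the conclusion of the $d\neq0$ case.
\item You perturb the explicit block-level cover instead. Since $|w_d-w_0|\le|d|$, one has $\|B_N(x;d)-B_N(x;0)\|\le4|d|$ uniformly in $x$. By Weyl's inequality the endpoints $\min_x\lb_j(H^-)$ and $\max_x\lb_j(H^+)$ are then Lipschitz in $d$, and the length bound passes to $d=0$.
\end{itemize}
For your version you must also state that the $d=0$ intervals still cover $\Sigma_{\al,\lb}$; your sketch leaves this implicit. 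It does hold. The inclusion half of Proposition~\ref{prop:block-cover} uses only min--max, not $K$. Proposition~\ref{prop:blocks} applies at $d=0$ because $b(x)=2s|\sin\pi x|\sin\pi x$ is $C^1$. In the form \eqref{vb} this $b$ has period $2$, not $1$ as you wrote, but that is harmless.

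Your route avoids the compactness extraction and identifies the limiting cover explicitly. The paper's route is agnostic about how the covers were built.
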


\begin{proof}
In the singularized representation for $H_{\al,\th,\lambda}$, we
have by \eqref{lambdanew} in all 3 cases $\mathcal L_j$, $j=1,2,3$,
$s=\wt\lb_1+\wt\lb_3=\wt\lb_2>0$. 
We also have $d=\wt\lb_1-\wt\lb_3$ and 
$\wt\lb=(s/2+d/2,s,s/2-d/2)$ in all 3 cases.

We first suppose that $d\ne0$. In Proposition \ref{prop:block-cover},
set $J=J_n$, $K=K_N$, $\mathcal E=\mathcal E_N$, 
\[
\bt=\sup_k|b(x_k)|\le C(2s/q_n^2+|d|/q_n).
\] 
{\mg  Note that $B_N(x)$, $B_N(x+1)$, $B_N(x-1)$ are unitarily equivalent by the transform 
\[U=\diag\{1,-1,1,-1,\dots\},\] so we only need to consider the case $x\in J_n$.}
Using now Proposition \ref{prop:blocks} with $\tau=\bt$
and Lemma \ref{lemmaEKnorms}, we immediately obtain
\eqref{incl2} and \eqref{L123-coverlength} in this case.

Suppose now that $d=0$. Let $0<\ep<s/2$ and consider $d'=2\ep$ and the corresponding $\wt\lb=\lb^{(\ep)}=(s/2+\ep,s,s/2-\ep)$.

With $z=e^{2\pi i(x+\alpha/2)}$, the difference of the respective matrix elements given in \eqref{eq:c} is
\[
 c_{\lambda^{(\varepsilon)},\alpha}(x)-c_{\lambda^{(0)},\alpha}(x)
 =\varepsilon(z^{-1}-z),\qquad
 \|c_{\lambda^{(\varepsilon)},\alpha}-c_{\lambda^{(0)},\alpha}\|_\infty\le2\varepsilon.
\]
Hence, uniformly in phase,
\[
 \|H_{\nu,\alpha,\theta,\lambda^{(\varepsilon)}}
       -H_{\nu,\alpha,\theta,\lambda^{(0)}}\|\le4\varepsilon.
\]
Writing $\mathcal S_\varepsilon=\mathcal S_{\nu,\lambda^{(\varepsilon)}}(\alpha)$, we obtain the Hausdorff distance
\[
 d_{\rm H}(\mathcal S_\varepsilon,\mathcal S_0)\le4\varepsilon.
\]
Fix $n$ and write $M=q_n+q_{n-1}$. By \eqref{incl2} and \eqref{L123-coverlength} just proved in the case $d'=2\ep\ne0$, adding {\mg zero-length} intervals if needed, we have
\[
 \mathcal S_\varepsilon\subset U_\varepsilon
 :=\bigcup_{j=1}^M[a_{\varepsilon,j},b_{\varepsilon,j}],\qquad
 \sum_{j=1}^M(b_{\varepsilon,j}-a_{\varepsilon,j})
 \le C\left(\frac{2\varepsilon}{q_n}+\frac{s}{q_n^2}\right).
\]
Since $\mathcal S_\ep$ lies in a fixed compact interval, we may intersect the covering intervals with it and replace empty intersections by singletons in it, without increasing the total length. Thus there is a sequence $\varepsilon_k\downarrow0$ on which every endpoint $a_{\varepsilon_k,j}$, $b_{\varepsilon_k,j}$ converges. We write
\[
 a_{\varepsilon_k,j}\longrightarrow a_j,\qquad
 b_{\varepsilon_k,j}\longrightarrow b_j,\qquad
 a_j\le b_j,\qquad U_0=\bigcup_{j=1}^M[a_j,b_j].
\]
Then
\[
 d_{\rm H}(U_{\varepsilon_k},U_0)
 \le\max_{1\le j\le M}\max\{|a_{\varepsilon_k,j}-a_j|,|b_{\varepsilon_k,j}-b_j|\}
 \longrightarrow0.
\]
Since
\[
 \sup_{E\in\mathcal S_0}\dist(E,U_0)
 \le d_{\rm H}(\mathcal S_0,\mathcal S_{\varepsilon_k})
       +d_{\rm H}(U_{\varepsilon_k},U_0)\longrightarrow0,
\]
we have $\sup_{E\in\mathcal S_0}\dist(E,U_0)=0$.
Since $U_0$ is closed, $\mathcal S_0\subset U_0$, and
\[
 \sum_{j=1}^M(b_j-a_j)
 =\lim_{k\to\infty}\sum_{j=1}^M(b_{\varepsilon_k,j}-a_{\varepsilon_k,j})
 \le\frac{Cs}{q_n^2}.
\]
which proves the proposition for $d=0$.
\end{proof}

\subsection{Spectral covers on ${\mathcal I}_{\rm iso}$.}
\label{sec:isotropic-cover}
We now take $\lambda\in{\mathcal I}_{\rm iso}$, so $v(x)$ and $b(x)$ are given 
by \eqref{vb-iso}.
In this case we have the following commutator matrix $K_N$ 
{\it independent} of phase. 
Define
\be\label{Kiso}
 \begin{split}
 \varphi&=\arccos\frac1{2\lb_1},\qquad
 \kappa_r=\frac{\pi(-1)^{r-1}\cos(r\varphi)}{\sin(\pi\al r)},\qquad r\ge1,\\
 (K_N)_{j,j+r}&=\kappa_r,\qquad
 (K_N)_{j+r,j}=-\kappa_r,\qquad (K_N)_{jj}=0,
 \quad 0\le j<j+r\le N-1.
 \end{split}
\ee

\begin{lemma}\label{lemmaEKiso}
Let $N>2$, $B_N(x)=\wh B_N(x)$ be given by \eqref{BhN} with $v(x)$, $b(x)$ as defined in \eqref{vb-iso}, and 
let $K_N$ be as defined in \eqref{Kiso}, $t_j$ in \eqref{tj}.
Then 
\[
 \mathcal E_N(x):=\frac{d}{dx}B_N(x)-[K_N,B_N(x)]
 \]
has the following edge matrix elements for $1\le j\le N-2$,
\be\label{Eedgeiso}\begin{aligned}
(\mathcal E_N)_{0,j}&=(\mathcal E_N)_{j,0}=\frac{(-1)^{j+1}\pi}{\sin(\pi(j+1)\al)}\cos((j+1)\varphi)b(t_{-1}),\\
(\mathcal E_N)_{j,N-1}&=(\mathcal E_N)_{N-1,j}=\frac{(-1)^{N-1-j}\pi}{\sin(\pi(N-j)\al)}\cos((N-j)\varphi)b(t_{N-1}),\\
\end{aligned}
\ee
and the corner elements
\be\label{Ecorneriso}\begin{aligned}
(\mathcal E_N)_{0,0} &=-\frac{2\pi}{\sin(\pi\al)}\cos( \varphi)b(t_{-1}),\\
(\mathcal E_N)_{N-1,N-1}&=\frac{2\pi}{\sin(\pi\al)}\cos(\varphi)b(t_{N-1}),\\
(\mathcal E_N)_{0,N-1}&= -\kappa_{N-1}(v(t_{N-1})-v(t_0))-
 \kappa_{N-2}(b(t_{N-2})-b(t_0)).
\end{aligned}
\ee
and the rest of the matrix elements of $\mathcal E_N(x)$ are zero.
\end{lemma}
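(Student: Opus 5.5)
The plan is to follow the proof of Lemma~\ref{lemmaEK} line by line, now with the phase-independent $K_N$ from \eqref{Kiso}. The three-term structure is the same, but the trace identities for $\mathcal P_r$ are replaced by trigonometric identities for $\cos(r\varphi)$. Since $B^T=B$ and $K^T=-K$, the matrix $[K,B]$ is symmetric, so only entries on and above the diagonal need to be computed. Because $K_N$ does not depend on $x$, $\frac{d}{dx}B_N$ is simply the entrywise derivative: $v'(t_j)$ on the diagonal and $b'(t_j)$ on the off-diagonal.

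For interior entries $1\le j$, $j+\ell\le N-2$, $\ell\ge2$, formula \eqref{commut} still holds with $k_r(t_j)$ replaced by the constant $\kappa_r$. It gives
\[
[K,B]_{j,j+\ell}=\kappa_\ell\bigl(v(t_{j+\ell})-v(t_j)\bigr)+\kappa_{\ell-1}\bigl(b(t_{j+\ell-1})-b(t_j)\bigr)+\kappa_{\ell+1}\bigl(b(t_{j+\ell})-b(t_{j-1})\bigr).
\]
Set $\ga_j=\pi(t_{j-1}+t_{j+\ell})$, which also equals $\pi(t_j+t_{j+\ell-1})$. With the explicit $v$ and $b$ from \eqref{vb-iso}, each difference is a product of sines:
\[
v(t_{j+\ell})-v(t_j)=4\sin(\pi\ell\al)\sin(\ga_j+\ga-\pi\al),
\]
\[
b(t_{j+\ell-1})-b(t_j)=4\lb_1\sin(\pi(\ell-1)\al)\sin(\ga_j+\ga),
\]
\[
b(t_{j+\ell})-b(t_{j-1})=4\lb_1\sin(\pi(\ell+1)\al)\sin(\ga_j+\ga).
\]
The signs need to be checked. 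The factor $\sin(\pi r\al)$ cancels against the denominator of $\kappa_r$. The sum then reduces to
\[
4\pi(-1)^{\ell}\Bigl[-\cos(\ell\varphi)\sin(\ga_j+\ga-\pi\al)+\lb_1\bigl(\cos((\ell-1)\varphi)+\cos((\ell+1)\varphi)\bigr)\sin(\ga_j+\ga)\Bigr].
\]
Since $2\lb_1\cos\varphi=1$, we have $\lb_1(\cos((\ell-1)\varphi)+\cos((\ell+1)\varphi))=\cos(\ell\varphi)$. The bracket therefore becomes $\cos(\ell\varphi)\bigl[\sin(\ga_j+\ga)-\sin(\ga_j+\ga-\pi\al)\bigr]$.

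This does not vanish on its face. I expect this to be the main obstacle: getting the phase conventions in \eqref{vb-iso} exactly right. The $-\pi\al$ shift in $v$ relative to $b$ has to be absorbed by the correct choice of $\ga_j$; for the potential the natural midpoint is $\pi(t_j+t_{j+\ell})$, which differs from $\ga_j$ by $\pi\al$. Redoing the computation with the correct midpoints should make the bracket vanish identically. This is exactly what $\ga=\arccos(\lb_2/2\lb_1)$ and $\varphi=\arccos(1/2\lb_1)$ are designed for: the relation $\lb_2=2\lb_1\cos\ga$ makes $b$ a pure cosine, and $\cos\varphi=1/(2\lb_1)$ matches the ratio of the potential and hopping amplitudes.

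Next come the low cases. For $\ell=1$, the same computation with the missing $\kappa_0$ term, interpreted as $\kappa_0=0$, should reproduce $b'(t_j)$; this uses $\cos\varphi\cdot 2\lb_1=1$ again, and $\cos 2\varphi$ enters through $\kappa_2$. For $\ell=0$, the diagonal entry $2\kappa_1(b(t_j)-b(t_{j-1}))$ should equal $v'(t_j)$. Both are direct trigonometric checks.

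Finally, the boundary rows and columns are handled as in the proof of Lemma~\ref{lemmaEK}. For $j+\ell=N-1$, the term $K_{j,N}B_{N,N-1}$ is missing. I will add and subtract its formal value $\kappa_{N-j}b(t_{N-1})$. The interior identity then gives $[K,B]_{j,N-1}=\frac{d}{dx}B_{j,N-1}-\mathcal E_{j,N-1}$, with $\mathcal E_{j,N-1}=\pm\kappa_{N-j}b(t_{N-1})$, which is \eqref{Eedgeiso}. Similarly, row $0$ lacks the term $K_{-1,\ell}B_{0,-1}$, which produces the $b(t_{-1})$ entries. The corner $(0,0)$ lacks one of the two terms in $2\kappa_1(\cdot)$, and likewise $(N-1,N-1)$, giving the factors $2\pi\cos\varphi/\sin(\pi\al)$ times $b(t_{-1})$ or $b(t_{N-1})$. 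The $(0,N-1)$ entry has both end terms missing, so the remaining $\kappa_{N-1}$ and $\kappa_{N-2}$ contributions survive as stated in \eqref{Ecorneriso}, up to the derivative, which vanishes since $N-1\ge2$. All other entries vanish by the interior computation.
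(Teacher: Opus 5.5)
Your plan is the paper's proof: formula \eqref{commut} with $k_r$ replaced by the constants $\kappa_r$, cancellation via $2\lb_1\cos\varphi=1$, and add-and-subtract bookkeeping for the missing boundary terms. It goes through once you correct the slip you yourself diagnosed: since $\pi(t_j+t_{j+\ell})-\pi\al=\ga_j$, one has $v(t_{j+\ell})-v(t_j)=4\sin(\pi\ell\al)\sin(\ga_j+\ga)$ rather than $4\sin(\pi\ell\al)\sin(\ga_j+\ga-\pi\al)$, so your bracket is $\cos(\ell\varphi)\bigl[\sin(\ga_j+\ga)-\sin(\ga_j+\ga)\bigr]=0$. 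The remaining checks work as you predict: for $\ell=1$ the sum is $4\pi\sin(2\pi t_j+\ga)\bigl(\cos\varphi-\lb_1\cos2\varphi\bigr)=4\pi\lb_1\sin(2\pi t_j+\ga)=\frac{d}{dx}b(t_j)$, the diagonal gives $8\pi\lb_1\cos\varphi\,\sin(2\pi t_j-\pi\al+\ga)=\frac{d}{dx}v(t_j)$, and the sign in $\mathcal E_{j,N-1}=\pm\kappa_{N-j}b(t_{N-1})$ is $+$.
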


\begin{proof}
Similarly to Lemma \ref{lemmaEK}, we see that ${\mathcal E}_N$ is symmetric,
so it is sufficient to verify its upper-right corner.
We proceed in the same way as in the proof of Lemma \ref{lemmaEK}
and use the relations \eqref{commut}. The argument is simpler in this case. In particular, for $1\le j\le N-2$, $\ell\ge 2$, \eqref{commut} with the present $B$ and $K$ gives
\be\begin{aligned}
[K,B]_{j,j+\ell}
&=\frac{(-1)^{\ell-1}\pi}{\sin\pi\ell\al}\cos(\ell\varphi)(v(t_{j+\ell})-v(t_j))+
\frac{(-1)^{\ell}\pi}{\sin\pi(\ell-1)\al}\cos((\ell-1)\varphi)(b(t_{j+\ell-1})-b(t_j))\\
&+
\frac{(-1)^{\ell}\pi}{\sin\pi(\ell+1)\al}\cos((\ell+1)\varphi)(b(t_{j+\ell})-b(t_{j-1})).
\end{aligned}
\ee
Substituting here the explicit formulae for $v(t)$ and $b(t)$ and using the fact that $\cos\varphi=1/(2\lb_1)$, we obtain that $[K,B]_{j,j+\ell}=0$.
The other matrix elements are verified similarly.
\end{proof}

\begin{proposition}\label{prop:isotropic-cover}
Let $\lambda\in\mathcal I_{\rm iso}$, $\alpha$ be irrational, and $n\ge4$. There are closed intervals $I_{n,1},\ldots,I_{n,M_n}$ such that
\be
 \Sigma_{\alpha,\lambda}\subset\bigcup_{j=1}^{M_n}I_{n,j},\qquad
 M_n\le q_n+q_{n-1},\qquad
 \sum_{j=1}^{M_n}|I_{n,j}|\le\frac{C\lb_1}{q_n},
\ee
where $C>0$ is absolute.
\end{proposition}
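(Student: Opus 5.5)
The plan is to follow the proof of Proposition~\ref{prop:L123-cover}, now with the phase-independent matrix $K_N$ of \eqref{Kiso}. By Corollary~\ref{cor:singular-representatives}, $M_{\alpha,\lambda}$ is unitarily equivalent to the direct integral of $H_{v,b,\alpha,\theta}$ with $v,b$ as in \eqref{vb-iso}. Here $b$ is real, $1$-periodic, $C^1$, and satisfies $b(0)=\lambda_2-2\lambda_1\cos\gamma=0$. Proposition~\ref{prop:blocks} then places $\Sigma_{\alpha,\lambda}$ inside the union, over $N\in\{q_n,q_{n-1}\}$ and $x\in J_n$, of the spectra of $\widehat B_N(x)+\widehat\Gamma_N^*V\widehat\Gamma_N$ with $\|V\|\le\tau$. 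Since $|b'|\le 4\pi\lambda_1$, we have $\tau=\sup_k|b(x_k)|\le 4\pi\lambda_1|J_n|\le C\lambda_1/q_n$. Periodicity of $v$ and $b$ lets us reduce to $x\in J_n$, and we set $\beta=\tau$.

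For each of the two block lengths $N$ we apply Proposition~\ref{prop:block-cover} with $J=J_n$, $B=B_N$, and $K=K_N$. Lemma~\ref{lemmaEKiso} shows that $\mathcal E_N$ is supported on the first and last rows and columns, so $(I-P_N)\mathcal E_N(I-P_N)=0$. This produces $N$ intervals per block length, hence $M_n\le q_n+q_{n-1}$ in total. It remains to bound the right side of \eqref{eq:boundary-length}, namely
\[
4\beta+4|J_n|\sup_{x\in J_n}\bigl(\|\mathcal E_N(x)\|_{S_2}+\beta\|[K_N,P_N]\|_{S_2}\bigr).
\]

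The main step is estimating these Hilbert--Schmidt norms, and the approach mirrors Lemma~\ref{lemmaEKnorms}. Because $\lambda_2\ge1$ on $\mathcal I_{\rm iso}$, we have $2\lambda_1\ge1$, so $\varphi$ is real and $|\cos(r\varphi)|\le1$. Hence, by \eqref{sumsin},
\[
\|[K_N,P_N]\|_{S_2}^2\le C\sum_{r=1}^{N-1}|\sin(\pi\alpha r)|^{-2}\le Cq_n^2 .
\]
The edge and diagonal corner entries in \eqref{Eedgeiso}, \eqref{Ecorneriso} carry the factors $b(t_{-1})=b(x)$ and $b(t_{N-1})=b(x+N\alpha)$. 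These phases lie in $J_n$ modulo~$1$: for $x$ this is the construction, and for $x+N\alpha$ it follows from the return property of the cuts $m_k$ (the same hypothesis as in Lemma~\ref{lemmaEKnorms}). Thus $|b(t_{-1})|,|b(t_{N-1})|\le C\lambda_1/q_n$, and together with \eqref{sumsin} these entries contribute at most $C\lambda_1$ to $\|\mathcal E_N\|_{S_2}$.

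The obstacle is the $(0,N-1)$ corner entry, since $\kappa_{N-1}$ and $\kappa_{N-2}$ can each be of size $q_n$. The entry involves only the differences $v(t_{N-1})-v(t_0)$ and $b(t_{N-2})-b(t_0)$. Since $t_{N-1}-t_{-1}=N\alpha$ lies within $O(1/q_n)$ of an integer and $t_{N-2}-t_{-1}=(N-1)\alpha$, these differences are $O(1/q_n)$ or can be matched with the shifted terms. I would first attempt the crude bound $|\kappa_r|\le C/\|r\alpha\|$. If it fails to give $O(\lambda_1)$ for $r=N-1,N-2$ (these $r$ are not near-return times), I would instead rewrite the entry through the same telescoping identity that gave $[K,B]_{j,j+\ell}=0$, expressing it via $b(t_{-1})$ and $b(t_{N-1})$ and differences of the form $v(t+N\alpha)-v(t)$. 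Each of these is $O(\lambda_1/q_n)$ and is multiplied by a coefficient $O(q_n)$, giving $O(\lambda_1)$ for the corner entry.

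With $\|\mathcal E_N\|_{S_2}\le C\lambda_1$, $\beta\|[K_N,P_N]\|_{S_2}\le C\lambda_1$, and $|J_n|<2/q_n$, the displayed right side of \eqref{eq:boundary-length} is at most $C\lambda_1/q_n$. Summing over the two block lengths gives the claimed cover.
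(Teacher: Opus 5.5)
Your architecture matches the paper's. You use the singular representative \eqref{vb-iso}, Proposition~\ref{prop:blocks} with $\tau=\beta\le C\lambda_1/q_n$, Proposition~\ref{prop:block-cover} with the constant $K_N$ of \eqref{Kiso}, and the two Hilbert--Schmidt bounds. Your edge, diagonal-corner and $[K_N,P_N]$ estimates are fine.

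\textbf{The gap.} It is in the step you flag yourself, the corner entry
$(\mathcal E_N)_{0,N-1}=-\kappa_{N-1}\bigl(v(t_{N-1})-v(t_0)\bigr)-\kappa_{N-2}\bigl(b(t_{N-2})-b(t_0)\bigr)$.
You leave it as ``try the crude bound, else telescope,'' and the size heuristics behind both options do not hold.
\begin{itemize}
\item The shifts are $(N-1)\alpha$ and $(N-2)\alpha$, measured from $t_0$ rather than $t_{-1}$. These are not near integers, so the differences are not $O(1/q_n)$.
\item The crude bound $|\kappa_r|\le C/\|r\alpha\|$ times the trivial $O(1+\lambda_1)$ bound on the differences gives only $C\lambda_1/\|(N-1)\alpha\|+C\lambda_1/\|(N-2)\alpha\|$. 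This is not an absolute multiple of $\lambda_1$. Uniformly in $\alpha$ it is only $O(\lambda_1 q_n)$, which would turn the cover length into $O(\lambda_1)$ instead of $O(\lambda_1/q_n)$.
\item The fallback fails as stated. The six-term identity turns the entry into $\kappa_N\bigl(b(t_{N-1})-b(t_{-1})\bigr)$. For $N=q_n$ the coefficient $\kappa_{q_n}$ can be of size $1/\|q_n\alpha\|\asymp q_{n+1}$, not $O(q_n)$. Knowing only $|b(t_{N-1})|,|b(t_{-1})|\le C\lambda_1/q_n$ then gives $O(\lambda_1 q_{n+1}/q_n)$, which is unbounded for Liouville $\alpha$.
\end{itemize}

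\textbf{The fix.} Each difference carries exactly the factor that cancels the denominator of $\kappa_r$. By sum-to-product,
\[
v(t_0+r\alpha)-v(t_0)=4\sin\bigl(\pi(2t_0+r\alpha)-\pi\alpha+\gamma\bigr)\sin(\pi r\alpha),\qquad
b(t_0+r\alpha)-b(t_0)=4\lambda_1\sin\bigl(\pi(2t_0+r\alpha)+\gamma\bigr)\sin(\pi r\alpha).
\]
Equivalently, by $1$-periodicity, $|v(a)-v(a')|\le 4\pi\|a-a'\|$ and $|b(a)-b(a')|\le 4\pi\lambda_1\|a-a'\|$, where $\|\cdot\|$ is the distance to $\Z$. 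Hence $|(\mathcal E_N)_{0,N-1}|\le 4\pi(1+\lambda_1)\le 12\pi\lambda_1$, with no arithmetic input. The same cancellation is what makes the telescoped form bounded. With this, your argument closes.

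\textbf{Two minor slips.}
\begin{itemize}
\item $\lambda_2\ge1$ is false on $\mathcal I_{\rm iso}$, e.g.\ $\lambda=(1,\tfrac12,1)$. The inequality $2\lambda_1=\Lambda\ge1$ you need holds directly.
\item Proposition~\ref{prop:block-cover} takes $x$ over all of $J_n$, so the return property of the cuts does not give the bound on $b(t_{N-1})$. Use instead $\|x+N\alpha\|\le|x|+\|N\alpha\|\le 2|\delta_{n-1}|<2/q_n$.
\end{itemize}
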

\begin{proof}
Applying similar arguments as in the proof of Lemma \ref{lemmaEKnorms}
to $\mathcal E_N$ of Lemma \ref{lemmaEKiso}, we easily see that
$\|\mathcal E_N\|_{S_2}\le C\lb_1$,
$\|[K_N,P_N]\|_{S_2}\le Cq_n$,
and therefore Propositions  \ref{prop:blocks}, \ref{prop:block-cover}
with $J=J_n$, $K=K_N$, $\mathcal E=\mathcal E_N$, 
$\tau=\bt=\sup_k|b(x_k)|\le C\lb_1/q_n$, give the desired results.
\end{proof}

\begin{proof}[Proof of Theorem~\ref{thm:all-scales}]
This is the union of Propositions \ref{prop:L123-cover}, \ref{prop:isotropic-cover}.
\end{proof}

\section{Dimension bounds}\label{sec:dimensions}

The covers in Section~\ref{sec:covers} give upper bounds for fractal dimensions
of the spectrum. For $\beta(\alpha)>0$ (see \eqref{beta}), we combine them with lower bounds
on packing dimension from \cite{HYZ}.

\subsection{Dimension bounds from spectral covers: proof of
Theorems \ref{thm:HD} and \ref{thm:upper-box}}\label{sec:dimension-upper}

For a nonempty compact set $\Omega\subset\R$, we write $N_\Omega(\varepsilon)$ for the least number of intervals of length at most $\varepsilon$ covering $\Om$. 
Then the lower and upper box dimensions of $\Om$ are given, respectively, by
(see, e.g., \cite{falc})
\[
 \lbox \Om=\liminf_{\varepsilon\downarrow0}\frac{\ln N_\Om(\varepsilon)}{\ln(1/\varepsilon)},
 \qquad \ubox \Om=\limsup_{\varepsilon\downarrow0}\frac{\ln N_\Om(\varepsilon)}{\ln(1/\varepsilon)}.
\]
When these coincide, their common value is the box-counting dimension $\dim_{\mathrm B}\Om$. 
The $t$-dimensional Hausdorff measure of $\Om$ is
\[
 \meas_t(\Om)=\lim_{\delta\downarrow0}\inf\left\{\sum_j |I_j|^t:\Om\subset\bigcup_jI_j,\ |I_j|\le\delta\right\},
\]
where infimum is taken over all finite or countable covers of $\Om$ by intervals $I_j$
of length no larger than $\de$.
The Hausdorff dimension of $\Om$ is then $\dimH(\Om)=\inf\{t>0:\meas_t(\Om)<\infty\}$. The packing dimension is
\[
 \pdim \Om=\inf_{\Om\subset\bigcup_{j\ge1}\Om_j}\sup_j\ubox \Om_j,
\]
where the covering sets $\Om_j$ are bounded. 
Recall that in contrast with box dimensions, the packing dimension is
countably stable: $\pdim\cup_{j=1}^\infty\Om_j=\sup_j\pdim\Om_j$.
Furthermore, recall that the dimensions satisfy the following inequalities:
\be\label{dimcompare}
\dimH\Om\le\lbox\Om\le\ubox\Om,\qquad \dimH\Om\le\pdim\Om\le\ubox\Om.
\ee

\begin{proposition}\label{prop:dimension}
Let $\Om\subset\R$ be nonempty and compact, $Q_n\uparrow\infty$ 
monotonically
as $n\to\infty$, and 
$\varkappa>0$. Suppose that, for $n\ge n_0$ with some $n_0>0$, 
there are covers of $\Om$ by intervals $I_{n,j}$ satisfying
\[
\Om\subset\bigcup_{j=1}^{M_n}I_{n,j},\qquad M_n\le AQ_n,\qquad \sum_j|I_{n,j}|\le BQ_n^{-\varkappa},
\]
for some $A,B>0$ independent of $n$.
Let $d_*=1/(1+\varkappa)$. Then
\begin{equation}\label{eq:H-measure-lbox-ineq}
 \meas_{d_*}(\Om)\le A^{1-d_*}B^{d_*},\qquad  \lbox \Om\le d_*
\end{equation}
Moreover,
\be\label{uboxest}
\ubox \Om\le\limsup_n\frac{\ln Q_{n+1}}{\varkappa\ln Q_n+\ln Q_{n+1}}.
\ee
\end{proposition}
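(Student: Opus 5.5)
The plan is to derive all three conclusions from the covers by elementary counting, using Hölder's inequality for the Hausdorff measure and a length-filtering argument for the box dimensions.

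\emph{Hausdorff measure.} For each $n\ge n_0$, Hölder's inequality with exponents $1/d_*$ and $1/(1-d_*)$ gives
\[
\sum_j|I_{n,j}|^{d_*}\le M_n^{1-d_*}\Bigl(\sum_j|I_{n,j}|\Bigr)^{d_*}\le (AQ_n)^{1-d_*}B^{d_*}Q_n^{-\varkappa d_*}=A^{1-d_*}B^{d_*},
\]
since $1-d_*=\varkappa d_*$. Every interval has length at most $BQ_n^{-\varkappa}\to0$, so for any $\delta>0$ these covers are admissible once $n$ is large. This yields $\meas_{d_*}(\Om)\le A^{1-d_*}B^{d_*}$.

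\emph{Lower box dimension.} Take $\varepsilon_n=BQ_n^{-\varkappa}$. Each $I_{n,j}$ has length at most $\varepsilon_n$, so $N_\Om(\varepsilon_n)\le M_n\le AQ_n$. Then
\[
\frac{\ln N_\Om(\varepsilon_n)}{\ln(1/\varepsilon_n)}\le\frac{\ln A+\ln Q_n}{\varkappa\ln Q_n-\ln B}\to\frac1\varkappa .
\]
This crude bound is too weak, so it must be sharpened. The fix is to choose $\varepsilon_n=Q_n^{-(1+\varkappa)}$ and split the intervals into long ones and short ones. At most $BQ_n^{-\varkappa}/\varepsilon_n=BQ_n$ intervals have length exceeding $\varepsilon_n$, and each of those splits into at most $|I|/\varepsilon_n+1$ pieces. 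This gives
\[
N_\Om(\varepsilon_n)\le M_n+\sum_j|I_{n,j}|/\varepsilon_n\le (A+B)Q_n .
\]
Hence $\ln N_\Om(\varepsilon_n)/\ln(1/\varepsilon_n)\to 1/(1+\varkappa)=d_*$, and so $\lbox\Om\le d_*$.

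\emph{Upper box dimension.} This is the main step, because it has to work for every small $\varepsilon$ and not only along a sequence. For general $\varepsilon$, take $n$ with $Q_{n+1}^{-(1+\varkappa)}<\varepsilon\le Q_n^{-(1+\varkappa)}$, or better, optimize over which cover to use. For any $m$, the same splitting as above gives
\[
N_\Om(\varepsilon)\le AQ_m+BQ_m^{-\varkappa}\varepsilon^{-1}.
\]
For $\varepsilon$ between $\varepsilon_{n+1}=Q_{n+1}^{-(1+\varkappa)}$ and $\varepsilon_n$ we have two choices. Using $m=n$ gives $N\lesssim Q_n+Q_n^{-\varkappa}\varepsilon^{-1}$, and using $m=n+1$ gives $N\lesssim Q_{n+1}$. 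Take the minimum of the two. With $\varepsilon=e^{-s}$, the ratio $\ln N/\ln(1/\varepsilon)$ is largest at the crossover $Q_n^{-\varkappa}\varepsilon^{-1}=Q_{n+1}$, that is, at $\ln(1/\varepsilon)=\varkappa\ln Q_n+\ln Q_{n+1}$. There the ratio equals $\ln Q_{n+1}/(\varkappa\ln Q_n+\ln Q_{n+1})$, up to $o(1)$ errors from the constants $A$ and $B$.

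To confirm the crossover is the worst case, check monotonicity on the two sides. On the side of smaller $1/\varepsilon$, the dominant term $Q_n^{-\varkappa}\varepsilon^{-1}$ gives a ratio $1-\varkappa\ln Q_n/s$, which increases with $s$ up to the crossover. For larger $s$ the bound $Q_{n+1}$ is fixed while $s$ grows, so the ratio decreases. Near $\varepsilon_n$ the term $Q_n$ dominates and the ratio is roughly $d_*$, which does not exceed the crossover value because $Q_{n+1}\ge Q_n$. Taking $\limsup$ over $n$ gives \eqref{uboxest}. The main care needed is bookkeeping of the additive constants $\ln A$ and $\ln B$, which vanish after dividing by $\ln(1/\varepsilon)\to\infty$.
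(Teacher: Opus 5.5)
Your proposal is correct and follows the paper's proof essentially step by step. It uses H\"older for the $d_*$-measure, then the subdivision bound $N_\Om(\varepsilon)\le M_n+\varepsilon^{-1}\sum_j|I_{n,j}|$ at $\varepsilon_n=Q_n^{-(1+\varkappa)}$ for the lower box dimension. For $\varepsilon_{n+1}\le\varepsilon\le\varepsilon_n$, it takes the minimum of the bounds from the $n$-th and $(n+1)$-st covers, which is worst at the crossover $\ln(1/\varepsilon)=\varkappa\ln Q_n+\ln Q_{n+1}$. One phrasing is slightly off: the term $Q_n$ never strictly dominates on this range, since $Q_n\le Q_n^{-\varkappa}\varepsilon^{-1}$ there, with equality at $\varepsilon_n$; the paper uses exactly this to absorb it. Your separate treatment of that term is still valid.
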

\begin{proof}
For $0<t\le1$, H\"older inequality gives
\[
 \sum_{j=1}^{M_n}|I_{n,j}|^t\le M_n^{1-t}\left(\sum_{j=1}^{M_n}|I_{n,j}|\right)^t
 \le A^{1-t}B^t Q_n^{1-(1+\varkappa)t}.
\]
Since $\max_j|I_{n,j}|\le BQ_n^{-\varkappa}\to0$, taking $t=d_*$ 
shows that $\meas_{d_*}(\Om)\le A^{1-d_*}B^{d_*}$. This immediately implies that 
$\dimH\Om\le d_*$. However, we will now obtain a stronger (in view of \eqref{dimcompare})
result on $\lbox\Om$.

Dividing the intervals $I_{n,j}$ into subintervals of length $\varepsilon$, we obtain
\be\label{Nep}
 N_\Om(\varepsilon)\le \sum_{j=1}^{M_n}\left(\frac{1}{\ep}|I_{n,j}|+1\right)=
 \frac{1}{\varepsilon}\sum_{j=1}^{M_n}|I_{n,j}|+M_n
 \le\frac{B}{\ep Q_n^{\varkappa}}+AQ_n,\qquad n\ge n_0. 
\ee
Let
\be\label{epj}
\ep_j=\frac{1}{Q_j^{1+\varkappa}},\qquad j\ge n_0.
\ee
We take $n_0$ sufficiently large so that $\ep_{n_0}<1$.
Setting $\ep=\ep_n$ in \eqref{Nep} we obtain that 
$
N_\Om(\ep_n)\le(A+B)Q_n=(A+B)\ep_n^{-d_*},
$
which immediately implies that $\lbox\Om\le d_*$.

It remains to estimate $\ubox\Om$. Let $\ep_{n+1}\le\ep\le\ep_n$. First, we note that by \eqref{Nep},
\be\label{estN1}
N_\Om(\ep)\le \frac{B}{\ep Q_{n+1}^{\varkappa}}+AQ_{n+1}
\le \frac{B}{\ep_{n+1} Q_{n+1}^{\varkappa}}+AQ_{n+1}=(A+B)Q_{n+1}.
\ee
On the other hand, since
\[
Q_n=\frac{1}{\ep_n^{1/(1+\varkappa)}}\le\frac{1}{\ep^{1/(1+\varkappa)}}=
\frac{\ep^{\varkappa/(1+\varkappa)}}{\ep}\le 
\frac{\ep_n^{\varkappa/(1+\varkappa)}}{\ep}=\frac{1}{\ep Q_n^\varkappa},
\]
we also have
\be\label{estN2}
N_\Om(\ep)\le \frac{B}{\ep Q_n^{\varkappa}}+AQ_n\le 
\frac{A+B}{\ep Q_n^\varkappa}.
\ee
Together, \eqref{estN1} and \eqref{estN2} give
\be\label{estN3}
N_\Om(\ep)\le (A+B)\min\left( \frac{1}{\ep Q_n^\varkappa},Q_{n+1}\right),\qquad
\ep_{n+1}\le\ep\le\ep_n.
\ee
Denote $x=\ln(1/\ep)$, $x_*=\varkappa\ln Q_n+ \ln Q_{n+1}$. Note that
\[
\ln\frac{1}{\ep_n}=(1+\varkappa)\ln Q_n\le x_*
\le (1+\varkappa)\ln Q_{n+1}=\ln\frac{1}{\ep_{n+1}},
\]
and
\[
f(x):=\frac{1}{x}\min(x-\varkappa\ln Q_n,\ln Q_{n+1})= 
 \begin{cases}
 1-\varkappa \ln Q_n/x,&x\le x_*,\\ \ln Q_{n+1}/x,&x\ge x_*.
 \end{cases}
\]
Clearly, this function reaches its maximum on $(0,\infty)$ at $x=x_*$, and we have
\[
f(x)\le\frac{\ln Q_{n+1}}{\varkappa\ln Q_n+ \ln Q_{n+1}},\qquad
 \ep_{n+1}\le\ep\le\ep_n.
\]
Therefore \eqref{estN3} implies
\be
\frac{\ln N_\Om(\ep)}{\ln 1/\ep}\le
\frac{\ln (A+B)}{\ln 1/\ep}+f(x)\le \frac{\ln (A+B)}{\ln 1/\ep}+
\frac{\ln Q_{n+1}}{\varkappa\ln Q_n+ \ln Q_{n+1}},
\ee
and taking $\limsup$ gives \eqref{uboxest}.
\end{proof}

\begin{proof}[Proof of Theorems \ref{thm:HD},\ref{thm:upper-box}]
We apply Theorem \ref{thm:all-scales} together with Proposition~\ref{prop:dimension} where $\Om=\Sigma_{\alpha,\lambda}$ with 
$\lambda\in\mathcal C_0$, and $Q_n=q_n$.
By Theorem \ref{thm:all-scales} we can take
$\varkappa=1$, improving to $\varkappa=2$ on $\mathcal B_{\rm iso}$. 
This with \eqref{dimcompare} proves Theorem~\ref{thm:HD}. 
Theorem~\ref{thm:upper-box} follows from \eqref{uboxest} since
for positive $C$, 
the function
$t/(C+t)$, $t>0$, is continuous and strictly increasing.
\end{proof}

\subsection{Packing lower bounds and absence of box-counting dimension: Proof of Corollary \ref{cor:box-nonexistence}.}\label{app:packing}
For a finite nonzero measure $\mu$ on $\R$, its packing dimension is
\[
 \dim_{\rm P}\mu=\inf\{\pdim F:\mu(\R\setminus F)=0\}.
\]

For a bounded self-adjoint operator $H$ on a Hilbert space, the spectral measure associated with $f\ne0$, is defined in terms of the inner product as 
$\mu_f(B)=(f,\mathbf1_B(H)f)$, where $B$ is a Borel set in $\R$ and 
$\mathbf1_B$ its indicator function. The spectral measure satisfies
\[
\supp\mu_f\subset\spec(H),\qquad \dim_{\rm P}\mu_f\le\pdim\spec(H)\le 1.
\]

We assume $\beta(\alpha)>0$ and take the parameters in Corollary~\ref{cor:box-nonexistence}:
\[
 \lambda\in\{\lambda_{\mathrm A},\lambda_{\mathrm T}\}
 \cup\bigl((\mathcal I_{\mathrm{iso}} { \cup \mathcal B_{\mathrm{iso}}})\cap\{\lambda_2>0\}\bigr).
\]
By \cite[Corollary~2.6(3) and (2.18)]{HYZ}, 
in this case there is a spectral measure $\mu_f$ of $H_{\al,\th,\lb}$ 
such that $\dim_{\rm P}\mu_f=1$. Therefore, $\pdim \Sigma_{\alpha,\lambda}=1$,
and hence (by \eqref{dimcompare}) also  $\ubox\Sigma_{\alpha,\lambda}=1$.
For the almost Mathieu operator, the equality
$\pdim \Sigma_{\alpha,\lambda}=1$
also follows from \cite[Corollary~4]{JZ}.
By 
Theorem~\ref{thm:HD} we then obtain \eqref{noboxdim}.
In particular, we see that the lower and upper box dimensions are different so that there is no box-counting dimension.
This proves Corollary \ref{cor:box-nonexistence}.

\end{document}